\newtheorem{thm}{Theorem}
\newtheorem{lem}[thm]{Lemma}
\newtheorem{prop}[thm]{Proposition}
\newdefinition{rmk}{Remark}
\newdefinition{defn}{Definition}
\newdefinition{eg}{Example}
\begin{document}

\begin{frontmatter}



\title{Price of Anarchy for Non-atomic Congestion Games with Stochastic Demands}


\author{Chenlan Wang}
\ead{chenlan.wang@warwick.ac.uk}
\author{Xuan Vinh Doan}
\ead{X.Doan@warwick.ac.uk}
\author{Bo Chen\corref{cor1}}
\ead{B.Chen@warwick.ac.uk}
\cortext[cor1]{Corresponding author. Tel.: +44 2476524755}

\address{Warwick Business School $\&$  Centre for Discrete Mathematics and its Applications (DIMAP), University of Warwick, CV4 7AL, UK}

\begin{abstract}
We generalize the notions of user equilibrium and system optimum to non-atomic congestion games with stochastic demands.
We establish upper bounds on the price of anarchy for three different settings of link cost functions and demand distributions,
namely, (a) affine cost functions and general distributions, (b) polynomial cost functions and general positive-valued
distributions, and (c) polynomial cost functions and the normal distributions. All the upper bounds are tight in some special cases,
including the case of deterministic demands.
\end{abstract}

\begin{keyword}
price of anarchy \sep user equilibrium \sep system optimum \sep stochastic demand
\end{keyword}

\end{frontmatter}
\section{Introduction}
\label{}

Nonatomic congestion games illustrate non-cooperative situations involving large populations of players competing for a finite set of resources \Citep{Chau}. Routing problem in transportation networks is a very important application of non-atomic congestion games.
The price of anarchy (PoA), first introduced by \Citet{Koutsoupias} on a load-balancing game, is one of the main measures of system degradation due to lack of coordination. \Citet{Roughgarden2004} studied the PoA for non-atomic congestion games as the worst-case performance of the user equilibrium (UE) in terms of system optimality achieved at the system optimum (SO), where the UE \Citep{Wardrop} describes a steady state of travelers' selfish routing while the SO demonstrates the optimal usage of traffic resources as a result of a well-coordinated action on the whole network.

Quantitative study on the PoA enables us to deem certain outcomes of a game optimal or approximately optimal and to make known the factor influencing the inefficiency of the UE, and further contributes to mechanism design for congestion games. \Citet{Roughgarden2002} bounded the PoA when the link cost functions are separable, semi-convex and differentiable. The PoA was proved dependent only on the class of the cost functions, independent of the network topology. In particular, the PoA with affine cost functions is tightly bounded by $4/3$.

The main developments in the research on PoA were extensions to networks with a broader range of cost functions. \Citet{Chau} generalized Roughgarden and Tardos' results to the cases with symmetric cost functions. \Citet{Correa2004} gave a new proof of the upper bound of the PoA with cost functions that are non-convex, non-differentiable, and even discontinuous. \Citet{Perakis} extended the work to asymmetric cost functions and bounded the PoA by two parameters of asymmetry and nonlinearity. \Citet{Sheffi} introduced the notion of stochastic user equilibrium (SUE), which describes the travelers' selfish routing decisions on their subjective perceived travel costs by involving stochastic cost functions. The PoA on logit-based SUE was bounded by \Citet{Guo2010} on the basis of Sheffi's model.

Another line of developments in the PoA study is to improve the setting of the traffic demand to better reflect reality. \Citet{Chau} presented a weaker upper bound on the PoA with elastic demands. Although study on the PoA with stochastic demands is still quite new, efforts have been spent on modelling UE and SO involving demand uncertainty. It was assumed that the objective of selfish travelers was to choose the route that minimizes the mean travel cost \Citep{Sumalee2011} or weighted sum of the mean and the variance of the travel cost \Citep{Sumalee2011, Bell2002} with risk-neutral and risk-averse travelers, respectively. A travel time budget (TTB) was also considered in the equilibrium condition on the basis of reliability \Citep{Lo2006,Shao2006}. However, to deduce the distributions of the path and link flows, all these studies rely on some assumptions, such as that all the path flows follow the same type of distribution as the demand and have the same variance (or standard deviation) to mean ratio \Citep{Sumalee2011,Shao2006,Zhou2008}, and that all the path flows are independent \Citep{Watling2005,Sumalee2011,Shao2006,Zhou2008}. Apparently, these assumptions are open to questions for the relationship between the path flows and demand, not only because of lack of empirical data support but also they violate the demand feasibility constraint even in simple networks. In order to have a better reliable result on the PoA, we need to relax the aforementioned assumptions and establish a new equilibrium condition.

In this paper we present an analytical method to determine distributions of the path and link flows under a given demand distribution and, from a practical perspective, describe travelers' behaviors by route choice probabilities. We generalize the deterministic UE condition to a stochastic version with risk-neutral travelers. For our new model we establish upper bounds on the PoA, which are found to depend on cost functions and demand distributions. All these upper bounds are shown  to be tight in some special cases.

The remainder of the paper is organized as follows. Section~2 introduces generalized notions of user equilibrium (UE) and system optimum (SO) under demand uncertainty, formulates the equilibrium condition as a variational inequality problem and discusses existence and uniqueness of the equilibrium. Section~3 studies the PoA with affine cost functions and polynomial cost functions respectively. For polynomial cost functions, we first present an upper bound on the PoA for any general positive-valued demand distribution. Then we improve the upper bound when the demand distribution is specifically normal. In Section~4 we compare the upper bounds established in Section~3 and discuss connections with existing results in the literature. Conclusions are drawn in Section~5.

\section{Model with stochastic demand}

\subsection{The route choice model}
\label{sec:with_notation}

Consider a general network $G=(N,E)$, where $N$ and $E$ denote the set of nodes and edges, receptively. A subset of nodes forms a set of origin-destination (O-D) pairs, denoted by $I$. We call an O-D pair $i\in I$ a \emph{commodity}. Parallel edges are allowed and a node can be in multiple O-D pairs. Denote by $P_i$ the set of all possible paths connecting an O-D pair $i\in I$.

Day-to-day variability of the traffic demand is considered as the source of the uncertainty in this study. We assume that the demand distributions are given and publicly known, which is based on the fact that a traveler, especially a commuter, has knowledge of the probabilities of possible demand levels from his or her own experiences, although the actual current demand level is unknowable. A similar assumption can be found in the model of deterministic demands, which states that travelers have perfect knowledge of the fixed demand in the network \Citep{Wardrop}. The demands of different O-D pairs are assumed to be independent. We adopt the following notation in our study, where capital letters and lower cases letters are used to express random variables and, if applicable, the corresponding mean values, respectively.

\begin{itemize}
\item[$\mathbf{D}$: ] vector of traffic demands with component $D_{i}\in\mathbb{R}$ as the demand between O-D pair $i\in I$; \vspace{-10pt}
\item[$\mathbf{d}$: ] vector of mean traffic demands with compoent $d_i\in\mathbb{R}$ as the mean demand between O-D pair $i\in I$; \vspace{-10pt}
\item[$\sigma_i^2$: ] variance of $D_i$; \vspace{-10pt}
\item[$\theta_i$: ] coefficient of demand variation, i.e., $\theta_i=\sigma_i/d_i$; \vspace{-10pt}
\item[$\overline{\theta}$: ] maximum coefficient of demand variation, i.e., $\overline{\theta}=\max_{i\in I} \{\theta_i\}$; \vspace{-10pt}
\item[$\underline{\theta}$: ] minimum coefficient of demand variation, i.e., $\underline{\theta}=\min_{i\in I} \{\theta_i\}$; \vspace{-10pt}
\item[$F_{k}^{i}$: ] traffic flow on path $k\in P_{i}$, $F_{k}^{i} \in \mathbb{R}$; \vspace{-10pt}
\item[$f_{k}^{i}$: ] mean flow on path $k\in P_{i}$, $f_{k}^{i} \in \mathbb{R}$; \vspace{-10pt}
\item[$\mathbf{F}$: ] vector of path flows, i.e., $\mathbf{F}=(F_k^i: k\in P_i, i\in I)$; \vspace{-10pt}
\item[$\mathbf{f}$: ] vector of mean path flows, i.e., $\mathbf{f}=(f_k^i: k\in P_i, i\in I)$; \vspace{-10pt}
\item[$V_{e}$: ] traffic flow on edge $e\in E$, $V_{e} \in \mathbb{R}$; \vspace{-10pt}
\item[$v_{e}$: ] mean traffic flow on edge $e\in E$, $v_{e} \in \mathbb{R}$; \vspace{-10pt}
\item[$\mathbf{V}$: ] vector of link flows, i.e., $\mathbf{V}=(V_e: e\in E)$; \vspace{-10pt}
\item[$\mathbf{v}$: ] vector of mean link flows, i.e., $\mathbf{v}=(v_e: e\in E)$; \vspace{-10pt}
\item[$\delta_{k,e}^i$: ] link-path incidence indicator, which is 1 if link $e$ is included in path $k\in P_i$ and 0 otherwise, $e\in E,\ i\in I$; \vspace{-10pt}
\item[$\delta_e^i$: ] link-commodity incidence indicator, i.e., $\delta_e^i=\max_{k\in P_i}\delta_{k,e}^i$; \vspace{-10pt}
\item[$n_e$: ] number of O-D pairs that use link $e\in E$ in their paths, i.e., $n_e = \sum_{i\in I}\delta_e^i$;
    \vspace{-10pt}
\item[$n$: ] $n=\max_{e\in E} \{n_e\}$. Hence $n \le |I|$.
\end{itemize}

Given stochastic demand vector $\mathbf{D}=(D_i: i\in I)$, a multi-commodity flow $\mathbf{F}=(F_k^i: k\in P_i, i\in I)$ is said to be \emph{feasible} if
\begin{equation}\label{eqn:feasible_flow}
    \sum_{k\in P_i} F_k^i = D_i, \quad \forall\ i\in I.
\end{equation}
It is clear that the flow on each link is the sum of flows on all the paths that include the link:
\begin{equation}\label{eqn:Ve}
V_e=\sum_{i\in I}\underset{k\in P_i}\sum \delta_{k,e}^i F_k^i, \quad \forall\ e\in E.
\end{equation}

In a non-atomic congestion game, there are an infinite number of players, each controlling a negligible fraction of the overall traffic. The cost, denoted by $c_e(\cdot)$: $\mathbb{R}^{+}\rightarrow \mathbb{R}^{+}$, of traveling through edge $e\in E$ is a nondecreasing function of the total flow on it, which is also called a (link) cost function. The path cost is simply the sum of the costs of those links that are included in the path, i.e.,
\[ 
c_k^i(\mathbf{F})=\underset{e\in E}\sum \delta_{k,e}^i c_e(V_e),\quad \forall\  k\in P_i, \forall\ i\in I.
\] 
We denote any instance of a non-atomic congestion game by a triple $(G,\mathbf{D},\mathbf{c})$, where $G$ is the underlying network, $\mathbf{D}$ and $\mathbf{c}$ are the vectors of demands and (link) cost functions,
respectively.

Note that pure strategies and mixed strategies are regarded as the same in the deterministic UE model \Citep{Roughgarden2002}, as flow assignments according to mixed strategies can be obtained via pure strategies according to flow proportions. This is based on the assumption that all the other players' behaviors are known when one player makes a route choice. However, this assumption becomes no longer valid under stochastic demands and it is unrealistic to distinguish individual traffic of the same O-D pair at an equilibrium according to different routes taken. Thus it is necessary for us to consider mixed strategies since it is reasonable to assume that all the players of the same O-D pair play the same strategies at an equilibrium in such an environment with incomplete information \Citep{Myerson1998,Ashlagi}.

For any O-D pair $i\in I$, let $p_k^i$ be the probability that path $k\in P_i$ is chosen. Let
\begin{equation*}
\Omega=\{\mathbf{p}=(p_k^i\ge 0: k\in P_i, i\in I): \sum_{k\in P_i} p_k^i=1, i\in I\}.
\end{equation*}
Then $\Omega$ is the set of vectors of route choice probabilities across all the paths with a dimension of $ \sum_{i\in I}|P_i|$. In order to describe the traffic assignment under stochastic demand, we adopt the \textit{route choice model} to simulate travelers' path choice behaviors, which has been widely used in stochastic routing problems \Citep{Sheffi,Watling2005}. It needs to be noted that the route choice probabilities in Sheffi's model are used to describe the routing trend among all the travelers, which are estimated by the flow fractions on each path, and the routing choice of a specific traveler is still a pure strategy, determined by his or her own estimation of the travel cost. On the other hand, the routing choice probabilities in this study are in fact mixed strategies undertaken by travelers. The traffic assignment under stochastic demands is determined by the routing choice probabilities. The path and link flows are random variables related to the random demands and the routing choice probabilities, which consequently induce the random path and link travel costs.

In what follows we show that the distributions of link flows can be identified by the demand distributions and the routing choice probabilities. Since each traveler between any O-D pair $i\in I$ controls a negligible amount of traffic, $\Delta d_i$, the number of travelers on path $k\in P_i$ after demand $D_i$ is realized at $D_i = y$ is
\begin{equation}\label{eqn:path_traveler_number}
    m_k^i=\frac{F_k^i|D_i=y}{\Delta d_i}, \quad\forall\ k\in P_i.
\end{equation}
Since the routing choice on path $k\in P_i$ for every such traveler is a Bernoulli event with success probability $p_k^i$, the conditional number $m_k^i$ of travelers follows a multinomial distribution with the mean and variance as follows:
\begin{align*}
&\mathbb{E}[m_k^i]= m_ip_k^i,  &&\forall\ k\in P_i,\\
&\hbox{Var}[m_k^i] =m_ip_k^i(1-p_k^i), &&\forall\ k\in P_i,
\end{align*}
where $m_i = y/(\Delta d_i)$. Therefore, it follows from \eqref{eqn:path_traveler_number} that
\begin{align*}
&\mathbb{E}[F_k^i|D_i=y]=\Delta d_i m_ip_k^i=y p_k^i, \\
&\hbox{Var}[F_k^i|D_i=y]=(\Delta d_i)^2 \hbox{Var}[m_k^i]= \Delta d_i yp_k^i(1-p_k^i),
\end{align*}
for any $k\in P_i$. The variance above vanishes as $\Delta d_i \rightarrow 0$, which implies
\begin{equation*}
(F_k^i|D_i=y)\cong p_k^i\cdot y, \quad \forall\ k\in P_i,\  \forall\ i\in I.
\end{equation*}
Therefore, the path flows are determined once the demands are realized. The distributions of path flows then follows from the \textit{total probability theorem} as follows:
\begin{equation}\label{eqn:flow_distribution}
F_k^i \cong p_k^i\cdot D_i, \quad \forall\ k\in P_i,\  \forall\ i\in I.
\end{equation}
Similarly we can obtain the distributions of random link flows with the link-path conservation \eqref{eqn:Ve} .

\smallskip\textbf{Remark.} It is commonly assumed in the literature \Citep{Watling2005, Sumalee2011, Shao2006, Zhou2008} that all path flows $\{F_k^i: k\in P_i, i\in I\}$
are independent, which apparently violates the flow feasibility constraints \eqref{eqn:feasible_flow}. On the other hand, if we only assume that $\{D_i: i\in I\}$ are independent, then \eqref{eqn:flow_distribution} implies that, for any $i,i'\in I$ and any $k\in P_i, k'\in P_{i'}$, path flows $F_k^i$ and $F_{k'}^{i'}$ are independent of each other.

\subsection{Equilibrium under stochastic demand (UE-SD)}

As discussed in the previous section, under stochastic traffic demands we assume that risk-neutral travelers between the same O-D pair will use the same strategy at a steady state. We define our equilibrium condition such that
travelers cannot improve their expected travel costs by unilaterally changing their routing choice strategies.
\begin{defn}[UE-SD condition]
Given a transportation game $(G, \mathbf{D}, \mathbf{c})$, vector $\mathbf{p}\in \Omega$ of routing choice
probabilities is said to be a user equilibrium (UE-SD) if and only if
\begin{equation}\label{eqn:def_UE-SD}
\mathbb{E}[c_k^i(\mathbf{F})]\leq \mathbb{E}[c_\ell^i(\mathbf{F})], \quad\forall\ k, \ell\in P_i, i\in I \textrm{ with } p_k^i>0.
\end{equation}
\end{defn}

From the definition we see that, at UE-SD, all the paths with positive probabilities for the same O-D pair have the equal and minimum expected travel cost. When all travelers play mixed strategies according to the UE-SD condition, the expected travel costs are guaranteed to be at minimum. To solve the equilibrium problem, let us reformulate the UE-SD condition into a variational inequality (VI).
\begin{prop}
Given a transportation game $(G, \mathbf{D}, \mathbf{c})$, let $\mathbf{p^\ast}\in\Omega$ be a vector of routing choice probabilities. Then $\mathbf{p^\ast}$ is a UE-SD if and only if it satisfies the following VI condition: for any vector $\mathbf{p}$ of routing choice probabilities,
\begin{equation}\label{eqn:UE-SD-VI}
(\mathbf{f}-\mathbf{f}^{\ast})^T \mathbb{E}\left[c(\mathbf{F}^{\ast})\right]\geq 0,
\end{equation}
where $\mathbf{F}^{\ast}$ is the vector of path flows corresponding to $\mathbf{p}^{\ast}$, and $\mathbf{f}^{\ast}$ and $\mathbf{f}$ are, respectively, the vector of the mean path flow corresponding to $\mathbf{p}^{\ast}$ and $\mathbf{p}$.
\end{prop}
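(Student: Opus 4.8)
The plan is to recognise \eqref{eqn:UE-SD-VI} as the first-order optimality condition for minimising a \emph{linear} functional over a product of simplices, and then run the standard Wardrop-type equivalence argument with expected costs in place of deterministic ones. The crucial preliminary observation is that \eqref{eqn:flow_distribution} makes the mean path flows an affine image of the routing probabilities: $f_k^i=\mathbb{E}[F_k^i]=p_k^i d_i$ for all $k\in P_i$, $i\in I$. Assuming $d_i>0$ (a commodity with zero mean demand carries no traffic and can be discarded), this puts $\mathbf{p}\in\Omega$ in bijection with the set of feasible mean path flows $\mathcal{F}=\{\mathbf{f}\ge 0:\ \sum_{k\in P_i}f_k^i=d_i,\ i\in I\}$, a Cartesian product over $i\in I$ of simplices scaled by $d_i$. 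Since $\mathbf{F}^\ast$ is fixed, $\mathbb{E}[c(\mathbf{F}^\ast)]$ is a constant vector, so \eqref{eqn:UE-SD-VI} says exactly that $\mathbf{f}^\ast$ minimises $\mathbf{f}\mapsto\mathbf{f}^T\mathbb{E}[c(\mathbf{F}^\ast)]=\sum_{i\in I}\sum_{k\in P_i}f_k^i\,\mathbb{E}[c_k^i(\mathbf{F}^\ast)]$ over $\mathcal{F}$; because $\mathcal{F}$ is a product, this splits into one independent linear program per commodity.

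For the direction UE-SD $\Rightarrow$ VI, I would set $\pi_i=\min_{\ell\in P_i}\mathbb{E}[c_\ell^i(\mathbf{F}^\ast)]$; then for any $\mathbf{p}\in\Omega$ one has $\sum_{k\in P_i}p_k^i\,\mathbb{E}[c_k^i(\mathbf{F}^\ast)]\ge\pi_i$, whereas \eqref{eqn:def_UE-SD} forces $\sum_{k\in P_i}p_k^{i\ast}\,\mathbb{E}[c_k^i(\mathbf{F}^\ast)]=\pi_i$ since every path with positive probability attains the minimum; subtracting, multiplying by $d_i\ge 0$, summing over $i\in I$, and using $f_k^i=d_ip_k^i$ yields \eqref{eqn:UE-SD-VI}. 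For the converse, if \eqref{eqn:def_UE-SD} fails there exist $i_0\in I$ and $k,\ell\in P_{i_0}$ with $p_k^{i_0\ast}>0$ and $\mathbb{E}[c_k^{i_0}(\mathbf{F}^\ast)]>\mathbb{E}[c_\ell^{i_0}(\mathbf{F}^\ast)]$; I would take the test vector obtained from $\mathbf{p}^\ast$ by moving all mass from $k$ onto $\ell$ (i.e.\ $p_\ell^{i_0}=p_\ell^{i_0\ast}+p_k^{i_0\ast}$, $p_k^{i_0}=0$, all other entries unchanged), which still lies in $\Omega$, and for which the left-hand side of \eqref{eqn:UE-SD-VI} equals $d_{i_0}p_k^{i_0\ast}\big(\mathbb{E}[c_\ell^{i_0}(\mathbf{F}^\ast)]-\mathbb{E}[c_k^{i_0}(\mathbf{F}^\ast)]\big)<0$, contradicting the VI. Combining the two directions proves the claim.

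The content is really all in the reduction of the first paragraph: once one has the linear relation $f_k^i=d_ip_k^i$ from \eqref{eqn:flow_distribution} and notes that $\mathbb{E}[c(\mathbf{F}^\ast)]$ is constant (the cost is evaluated at the fixed flow $\mathbf{F}^\ast$, not at the decision variable), \eqref{eqn:UE-SD-VI} is an ordinary linear-optimality condition over a product of simplices and the rest is the classical equilibrium-versus-variational-inequality equivalence; I do not expect a genuine obstacle. I would only flag the standing regularity requirement that each $\mathbb{E}[c_k^i(\mathbf{F})]$ be finite, so that all the expectations above are well defined, and the mild assumption $d_i>0$ used to pass between $\mathbf{p}$ and $\mathbf{f}$.
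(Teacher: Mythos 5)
Your proposal is correct, and half of it coincides with the paper's argument: the direction ``UE-SD $\Rightarrow$ VI'' is done in the paper exactly as you do it, by introducing $\pi_i=\min_{\ell\in P_i}\mathbb{E}[c_\ell^i(\mathbf{F}^\ast)]$, noting that every path with $p_k^{i\ast}>0$ attains this minimum, and cancelling the $\pi_i$ terms via the conservation constraint $\sum_{k\in P_i}f_k^i=\sum_{k\in P_i}(f_k^i)^\ast=d_i$ (the paper likewise assumes $d_i>0$ and uses $f_k^i=p_k^id_i$ from \eqref{eqn:flow_distribution}, so your standing assumptions match). Where you genuinely diverge is the converse. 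The paper observes that, since $\mathbb{E}[\mathbf{c}(\mathbf{F}^\ast)]$ is a constant vector, a solution of \eqref{eqn:UE-SD-VI} solves the linear program $\min\{\mathbf{f}^T\mathbb{E}[\mathbf{c}(\mathbf{F}^\ast)]:\sum_{k\in P_i}f_k^i=d_i,\ f_k^i\ge 0\}$, writes down its dual $\max\{\lambda^T\mathbf{d}:\lambda_i\le\mathbb{E}[c_k^i(\mathbf{F}^\ast)]\}$, and reads \eqref{eqn:def_UE-SD} off the complementary slackness conditions $(\mathbb{E}[c_k^i(\mathbf{F}^\ast)]-\lambda_i)f_k^i=0$. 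You instead argue by contraposition with an explicit exchange: if some used path $k$ is strictly costlier in expectation than another path $\ell$ of the same commodity, shifting all of its probability mass onto $\ell$ produces a feasible test vector whose VI value is $d_{i_0}p_k^{i_0\ast}\bigl(\mathbb{E}[c_\ell^{i_0}(\mathbf{F}^\ast)]-\mathbb{E}[c_k^{i_0}(\mathbf{F}^\ast)]\bigr)<0$, contradicting \eqref{eqn:UE-SD-VI}. Your route is more elementary and self-contained (no appeal to LP duality or strong duality/complementary slackness), and it makes the per-commodity simplex structure explicit; the paper's route is slightly less direct but exhibits the dual variables $\lambda_i$, which play the role of the minimum expected commodity costs and connect the VI formulation to the optimization machinery used later (e.g.\ in Proposition~\ref{pro:VI-min}). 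Both arguments are valid, and your flagged caveats (finiteness of the expected path costs, $d_i>0$) are exactly the hypotheses the paper implicitly uses.
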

\begin{proof}
Since demand $d_i>0$ for every $i\in I$, according to \eqref{eqn:flow_distribution} we can write the UE-SD
condition \eqref{eqn:def_UE-SD} as follows:
\begin{equation}\label{SN-UE-poly-f}
\mathbb{E}[c_k^i(\mathbf{F})]\leq \mathbb{E}[c_\ell^i(\mathbf{F})], \quad\forall\ k, \ell\in P_i, i\in I \textrm{ with } f_k^i>0.
\end{equation}
Let $\pi_i= \min_{\ell\in P_i} \mathbb{E}[c_\ell^i(\mathbf{F})]$ for any $i\in I$, then \eqref{SN-UE-poly-f} is equivalent to
\begin{equation*}
\begin{cases}
f_k^i( \mathbb{E}[c_k^i(\mathbf{F})]-\pi_i)=0,\\
f_k^i\geq 0,\\
\end{cases}\quad\forall\ k\in P_i,\ \forall\ i\in I.
\end{equation*}

Let $\mathbf{p}^\ast$, $\mathbf{F}^\ast$ and $\mathbf{f}^\ast$ be the vectors of strategies and the corresponding
path flows, mean path flows at the UE-SD, respectively. Then
\[
\underset{i\in I}\sum\underset{k\in P_i}\sum (f_k^i)^\ast( \mathbb{E}[c_k^i(\mathbf{F}^\ast)]-\pi_i)=0.
\]
For any feasible $\mathbf{f}=(f_k^i\geq 0: k\in P_i, i\in I)$, we also have
\[
\underset{i\in I}\sum\underset{k\in P_i}\sum f_k^i( \mathbb{E}[c_k^i(\mathbf{\mathbf{F}}^\ast)]-\pi_i)\geq 0.
\]
Thus
\begin{equation}\label{Prop18-3}
\underset{i\in I}\sum\underset{k\in P_i}\sum (f_k^i)^{\ast}(\mathbb{E}[c_k^i(\mathbf{F}^\ast)]-\pi_i)\leq \underset{i\in I}\sum\underset{k\in P_i}\sum f_k^i(\mathbb{E}[c_k^i(\mathbf{F}^\ast)]-\pi_i).
\end{equation}
From the feasibility condition \eqref{eqn:feasible_flow} we have $\sum_{k\in P_i} f_k^i=\sum_{k\in P_i} (f_k^i)^\ast=d_i$ for every $i\in I$. Hence
\[
\underset{i\in I}\sum\underset{k\in P_i}\sum (f_k^i)^{\ast} \pi_i=\underset{i\in I}\sum\underset{k\in P_i}\sum f_k^i\pi_i,
\]
which together with \eqref{Prop18-3} implies \eqref{eqn:UE-SD-VI}:
\[
\underset{i\in I}\sum\underset{k\in P_i}\sum (f_k^i)^{\ast}\mathbb{E}[c_k^i(\mathbf{F}^\ast)]\leq \underset{i\in I}\sum\underset{k\in P_i}\sum f_k^i \mathbb{E}[c_k^i(\mathbf{F}^\ast)].
\]

On the other hand, observe that as the first order optimality condition, the solution of VI problem
\eqref{eqn:UE-SD-VI} also solves the following LP problem:
\begin{equation*}
\begin{array}{cll}
\min &  \mathbf{f}^T \mathbb{E}[\mathbf{c}(\mathbf{F}^\ast)]&\\
\text{s.t.}& \displaystyle\sum_{k\in P_i} f_k^i=d_i,  & i\in I,\\
&f_k^i\geq 0, & k\in P_i, \,i\in I,
\end{array}
\end{equation*}
the duality of which is
\[
\begin{array}{cll}
\displaystyle \max &  \mathbf{\lambda}^T \mathbf{d} &\\
\text{s.t.}& \displaystyle \lambda_i \le \mathbb{E}[c_k^i(\mathbf{F}^\ast)],  &k\in P_i, \, i\in I.
\end{array}
\]
Therefore, we have the following complementary slackness conditions:
\[
(\mathbb{E}[c_k^i(\mathbf{F}^\ast)]-\lambda_i)f_k^i =0, \ k\in P_i,\, i\in I,
\]
which imply \eqref{eqn:def_UE-SD}.
\end{proof}

An equivalence between the UE-SD condition and a minimization problem can also be established if the link cost functions are linear, which is stated as in the following proposition.
\begin{prop}\label{pro:VI-min}
Given a transportation game $(G, \mathbf{D}, \mathbf{c})$ with cost functions $\mathbf{c}$ linear, let $\mathbf{p^\ast}\in\Omega$ be a vector of routing choice probabilities. Then $\mathbf{p^\ast}$ is a UE-SD if and only if it solves the following minimization problem
\begin{equation}\label{eqn:UE-SD-MIN-P}
\min_{\mathbf{p} \in \Omega} Z(\mathbf{p})\equiv\sum_{e\in E}\int_0^{v_e} c_e(x) d x,
\end{equation}
where $v_e =\sum_{i\in I}\sum_{k\in P_i}\delta_{k,e}^i p_k^i d_i$.
\end{prop}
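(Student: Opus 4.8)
The plan is to recognize the minimization problem \eqref{eqn:UE-SD-MIN-P} as the Beckmann-type potential formulation associated with the variational inequality \eqref{eqn:UE-SD-VI}, so that the desired equivalence follows from the preceding proposition together with convexity of $Z$.

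First I would exploit linearity of the cost functions. Writing $c_e(x)=a_ex+b_e$ with $a_e\ge 0$ and $b_e\ge 0$, expectation commutes with $c_e$, so $\mathbb{E}[c_e(V_e)]=a_ev_e+b_e=c_e(v_e)$ and hence $\mathbb{E}[c_k^i(\mathbf{F})]=\sum_{e\in E}\delta_{k,e}^i c_e(v_e)$ depends on $\mathbf{p}$ only through the mean link flows $v_e=\sum_{i\in I}\sum_{k\in P_i}\delta_{k,e}^i p_k^i d_i$. Each $v_e$ is an affine function of $\mathbf{p}$, and since $c_e$ is nondecreasing the primitive $x\mapsto\int_0^x c_e(t)\,dt$ is convex; therefore $Z$ is a convex function on the polytope $\Omega$ (which is convex and compact, so a minimizer exists).

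Next I would compute the gradient of $Z$. By the chain rule together with the Leibniz rule for differentiating under the integral sign, $\partial Z/\partial p_k^i=\sum_{e\in E}c_e(v_e)\,\partial v_e/\partial p_k^i=d_i\sum_{e\in E}\delta_{k,e}^i c_e(v_e)=d_i\,\mathbb{E}[c_k^i(\mathbf{F})]$. Because $Z$ is convex and $\Omega$ is convex, $\mathbf{p}^\ast$ solves \eqref{eqn:UE-SD-MIN-P} if and only if it satisfies the first-order optimality condition $\nabla Z(\mathbf{p}^\ast)^T(\mathbf{p}-\mathbf{p}^\ast)\ge 0$ for all $\mathbf{p}\in\Omega$. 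Substituting the gradient and using $f_k^i=p_k^i d_i$, this becomes $\sum_{i\in I}\sum_{k\in P_i}\mathbb{E}[c_k^i(\mathbf{F}^\ast)]\bigl(f_k^i-(f_k^i)^\ast\bigr)\ge 0$, i.e.\ exactly \eqref{eqn:UE-SD-VI}; here I would note that $\mathbf{p}\mapsto\mathbf{f}$ is a bijection between $\Omega$ and the set of feasible mean path flows because $d_i>0$, so the two first-order conditions are genuinely the same. By the preceding proposition, this variational inequality is equivalent to $\mathbf{p}^\ast$ being a UE-SD, which completes the argument.

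The step that requires the most care is the first one: the reduction of $\mathbb{E}[c_k^i(\mathbf{F})]$ to a function of the mean flows alone. This is precisely where linearity of $\mathbf{c}$ is essential — for nonlinear costs the expected path cost involves higher moments of the path and link flows, $Z$ would no longer be a potential for the equilibrium VI, and the equivalence would fail. Everything else is the standard convex-optimization / Beckmann argument: verifying convexity of $Z$, justifying the differentiation, and matching the resulting first-order condition with \eqref{eqn:UE-SD-VI}.
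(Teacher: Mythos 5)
Your proposal is correct and follows essentially the same route as the paper: establish convexity of $Z$ on the convex compact set $\Omega$, invoke the first-order optimality condition, compute $\partial Z/\partial p_k^i = d_i\sum_{e\in E}\delta_{k,e}^i c_e(v_e)$, and identify this condition with the variational inequality \eqref{eqn:UE-SD-VI}, which the preceding proposition has shown to characterize UE-SD. The only difference is presentational: you make explicit the fact that linearity gives $\mathbb{E}[c_e(V_e)]=c_e(v_e)$ (so the gradient equals $d_i\,\mathbb{E}[c_k^i(\mathbf{F})]$ and the two first-order conditions coincide), a step the paper compresses into the closing remark ``which is equivalent to \eqref{eqn:UE-SD-VI} when the link cost functions are linear.''
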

\begin{proof}
We prove this proposition by verifying the equivalence between VI problem \eqref{eqn:UE-SD-VI} and
minimization problem \eqref{eqn:UE-SD-MIN-P}. Note that, since the link cost function $c_e(x)$ is continuously differentiable and non-decreasing, function $\int_0^{v_e} c_e(x) d x$ is convex (with respect to $v_e$) for any $e\in E$, which together with the fact that convexity is invariant under affine maps implies that the objective function
$Z(\mathbf{p})$ in \eqref{eqn:UE-SD-MIN-P} is convex, which together with the fact that the feasible region
$\Omega$ is convex and compact implies in turn that is a convex optimization problem.
Therefore, it is necessary and sufficient for $\mathbf{p}^*$ to satisfy the first order optimality condition of \eqref{eqn:UE-SD-MIN-P} \Citep[Proposition 2.1.2]{Bertsekas1999}:
\begin{equation}\label{eqn:1st-order-linear1}
(\mathbf{p}-\mathbf{p}^{\ast})^T \nabla Z(\mathbf{p^*}) \geq 0.
\end{equation}
Since
\[
\frac{\partial Z(\mathbf{p})}{\partial p_k^i}=\sum_{e\in E} c_e(v_e) \frac{\partial v_e}{\partial p_k^i}
=\sum_{e\in E} c_e(v_e) ( \delta_{k,e}^i d_i )=\ c_k^i(f)d_i,
\]
which together with $\left(f_k^i\right)^{\ast}=
\left(p_k^i\right)^{\ast}d_i$ due to \eqref{eqn:flow_distribution} implies that condition \eqref{eqn:1st-order-linear1} is equivalent to
\begin{equation*}
\quad (\mathbf{f}-\mathbf{f}^{\ast})^T \mathbf{c}(\mathbf{f}^{\ast})\geq 0,
\end{equation*}
which is equivalent to \eqref{eqn:UE-SD-VI} when the link cost functions are linear.
\end{proof}

In general, when link cost functions are nonlinear, we rewrite the UE-SD equivalent condition
\eqref{eqn:UE-SD-VI} in the following form by substituting $f_k^i=p_k^id_i$ and
$(f_k^i)^\ast=(p_k^i)^\ast d_i$:
\begin{equation}\label{eqn:VI_substitute}
(\mathbf{p}-\mathbf{p}^\ast)^T \mathbf{S}(\mathbf{p}^\ast)\geq 0, \ \mathbf{p} \in \Omega
\end{equation}
where $\mathbf{S}(\mathbf{p}^\ast)$ is a vector with the same dimension as $\mathbb{E}[\mathbf{c}(\mathbf{F})]$,
obtained by replacing element $\mathbb{E}[c_k^i(\mathbf{F})]$ in vector $\mathbb{E}[\mathbf{c}(\mathbf{F})]$ with
$\mathbb{E}[c_k^i(\mathbf{F})]d_i$ for every $k\in P_i,\ i\in I$.

Proposition~\ref{pro:VI-min} establishes that the VI condition for a UE-SD is just a restatement of the first
order necessary and sufficient condition of a convex minimization problem, if the cost functions $\mathbf{c}$
are linear. We use the more general VI condition
\eqref{eqn:VI_substitute} to establish the following general conditions for a UE-SD to exist and to be unique.

\begin{prop}[Existence and uniqueness of the UE-SD]
Let $(G,\mathbf{D}, \mathbf{c})$ be a transportation game. Then: (a) the game admits at least one UE-SD if the
link cost functions are continuous. Furthermore, (b) the UE-SD is unique if $\nabla \mathbf{S}(\mathbf{p})$ is positive definite.
\end{prop}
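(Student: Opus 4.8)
The plan is to argue entirely through the variational-inequality reformulation \eqref{eqn:VI_substitute}, which by the preceding proposition (together with the substitution $f_k^i=p_k^id_i$) characterises a UE-SD: $\mathbf{p}^\ast$ is a UE-SD if and only if $(\mathbf{p}-\mathbf{p}^\ast)^T\mathbf{S}(\mathbf{p}^\ast)\ge 0$ for all $\mathbf{p}\in\Omega$. The feasible set $\Omega$ is a Cartesian product of probability simplices $\{(p_k^i)_{k\in P_i}:p_k^i\ge 0,\ \sum_{k\in P_i}p_k^i=1\}$, hence a nonempty, convex, compact subset of $\mathbb{R}^{\sum_i|P_i|}$. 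For part~(a) I would invoke the Hartman--Stampacchia existence theorem for variational inequalities: if $K$ is nonempty, convex and compact and $\mathbf{S}:K\to\mathbb{R}^n$ is continuous, then $\mathrm{VI}(K,\mathbf{S})$ is solvable; concretely, its solutions are the fixed points of the continuous self-map $\mathbf{p}\mapsto\Pi_\Omega(\mathbf{p}-\mathbf{S}(\mathbf{p}))$ on the compact convex set $\Omega$, so Brouwer's fixed point theorem applies. The only nontrivial point is continuity of $\mathbf{S}$ on $\Omega$.

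To check that, note $\mathbf{S}$ is obtained from $\mathbb{E}[\mathbf{c}(\mathbf{F})]$ by scaling the $(k,i)$ component by the constant $d_i$, so it suffices that each map $\mathbf{p}\mapsto\mathbb{E}[c_k^i(\mathbf{F})]=\sum_{e}\delta_{k,e}^i\,\mathbb{E}[c_e(V_e)]$ is continuous, where by \eqref{eqn:flow_distribution}--\eqref{eqn:Ve} one has $V_e=\sum_{i\in I}\sum_{k\in P_i}\delta_{k,e}^i p_k^i D_i$. At a fixed sample point, $V_e$ is an affine, hence continuous, function of $\mathbf{p}$, and continuity of $c_e$ then makes $c_e(V_e)$ depend continuously on $\mathbf{p}$ pointwise; I would pass the limit through the expectation by dominated convergence. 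The domination is supplied by the structure of $\Omega$: for every $\mathbf{p}\in\Omega$, $0\le V_e\le\sum_{i\in I}D_i=:\bar D$ (since $\delta_{k,e}^i\le 1$ and $\sum_{k\in P_i}p_k^i=1$), so monotonicity of $c_e$ yields the uniform bound $c_e(V_e)\le c_e(\bar D)$, with $\mathbb{E}[c_e(\bar D)]<\infty$ — the mild integrability one needs anyway for the UE-SD condition to make sense. Hence $\mathbf{S}$ is continuous on $\Omega$ and (a)~follows.

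For part~(b), suppose $\mathbf{p}^{1},\mathbf{p}^{2}\in\Omega$ are both UE-SD. Writing \eqref{eqn:VI_substitute} once with $\mathbf{p}^\ast=\mathbf{p}^{1}$, $\mathbf{p}=\mathbf{p}^{2}$ and once with the roles reversed, then adding the two inequalities, gives
\begin{equation*}
(\mathbf{p}^{1}-\mathbf{p}^{2})^T\bigl(\mathbf{S}(\mathbf{p}^{1})-\mathbf{S}(\mathbf{p}^{2})\bigr)\le 0 .
\end{equation*}
On the other hand, along the segment $\mathbf{p}(t)=\mathbf{p}^{2}+t(\mathbf{p}^{1}-\mathbf{p}^{2})$, $t\in[0,1]$, which stays in the convex set $\Omega$, the fundamental theorem of calculus gives
\begin{equation*}
(\mathbf{p}^{1}-\mathbf{p}^{2})^T\bigl(\mathbf{S}(\mathbf{p}^{1})-\mathbf{S}(\mathbf{p}^{2})\bigr)=\int_0^1(\mathbf{p}^{1}-\mathbf{p}^{2})^T\,\nabla\mathbf{S}(\mathbf{p}(t))\,(\mathbf{p}^{1}-\mathbf{p}^{2})\,dt ,
\end{equation*}
and positive definiteness of $\nabla\mathbf{S}$ forces the integrand to be strictly positive whenever $\mathbf{p}^{1}\ne\mathbf{p}^{2}$ (only the symmetric part of $\nabla\mathbf{S}$ enters the quadratic form, so positive definiteness in the stated sense is exactly what is required). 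The two displays are incompatible unless $\mathbf{p}^{1}=\mathbf{p}^{2}$; equivalently, $\mathbf{S}$ is strictly monotone on $\Omega$ and the UE-SD is unique.

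The routine ingredients are the VI existence theorem (or the Brouwer argument behind it) and the strict-monotonicity computation. The one step that actually needs care is the continuity and finiteness of $\mathbf{p}\mapsto\mathbb{E}[\mathbf{c}(\mathbf{F})]$ in part~(a); I expect that to be the main obstacle, and the key observation that resolves it is that link flows stay uniformly bounded by $\bar D$ over $\Omega$, so monotonicity of the cost functions provides an integrable envelope. If one prefers not to assume $\mathbb{E}[c_e(\bar D)]<\infty$ in general, the same domination goes through verbatim for the cost/demand classes studied later in the paper (polynomial costs with finite-moment, in particular normal, demands).
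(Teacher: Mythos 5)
Your argument is correct and takes essentially the same route as the paper: the paper's proof simply cites the standard variational-inequality results (existence of a solution of \eqref{eqn:VI_substitute} from continuity of $\mathbf{S}$ on the compact convex set $\Omega$, and uniqueness from the positive-definite Jacobian via strict monotonicity, per Nagurney), and your Brouwer-projection and segment-integration arguments are exactly the proofs of those cited theorems. The only extra content is your dominated-convergence check that continuity of the link costs passes to $\mathbf{p}\mapsto\mathbb{E}[\mathbf{c}(\mathbf{F})]$, a step the paper leaves implicit and which, as you note, needs only the mild integrability $\mathbb{E}\bigl[c_e\bigl(\sum_{i\in I}D_i\bigr)\bigr]<\infty$ already required for the UE-SD condition to be well defined.
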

\begin{proof}
(a) The existence of a solution $\mathbf{p}^*\in \Omega$ to \eqref{eqn:VI_substitute} is implied by the
continuity of $\mathbf{S}(\mathbf{p})$ and compactness of $\Omega$. (b) The uniqueness is implied by the
positive definiteness of the Jacobian matrix of $\mathbf{S}(\mathbf{p})$
\citep[see][Proposition 1.5 and Theorem 1.8]{Nagurney1998}.
\end{proof}

NB: When the link cost functions are affine and strictly monotone, then $\nabla\mathbf{S}(\mathbf{p})$
is positive definite.

\subsection{System optimum under stochastic demand (SO-SD)}

At a system optimum (SO-SD), traffic is coordinated by a central authority according to mixed strategies.
It should be noted in the case of coordination that traffic is assigned according to route choice probabilities
rather than by traffic proportions. This is due to the fact that demand is cumulative over the time period,
while traffic allocation needs to be made once a traffic flow arrives the route entrance. The central
authority has to implement traffic coordination without full knowledge of the actual demand. The objective of
for the coordinator is to minimize the expectation of the total travel cost at an SO-SD. This gives rise to
our following definition.
\begin{defn}[SO-SD condition]
Given a transportation game $(G, \mathbf{D}, \mathbf{c})$ with stochastic demands, a vector
$\mathbf{p}\in \Omega$ of routing choice probabilities is said to be an SO-SD strategy if it
solves the following minimization problem
\begin{equation}\label{eqn:SO-SD}
\min_{\mathbf{p}\in\Omega}\, T(\mathbf{p}) \equiv \,
   \mathbb{E}\hspace{-2pt}\left[\underset{e\in E}\sum c_e(V_e)V_e \right],
\end{equation}
where $V_e$ is a function of $\mathbf{p}$ given by \eqref{eqn:Ve} and \eqref{eqn:flow_distribution}.
\end{defn}

\section{Price of anarchy}

In this section we investigate the price of anarchy (PoA) to be defined below based on the model presented
in the preceding section with the expected total cost $T(\cdot)$ defined in the network by \eqref{eqn:SO-SD}
as the social (system)
objective function. Given an instance $(G, \mathbf{D}, \mathbf{c})$ of the transportation game with
stochastic demands, the corresponding PoA is defined as the worst-case ratio between expect total costs
at UE-SD and at SO-SD:
\[
\mbox{PoA}(G,\mathbf{D},\mathbf{c}):=\max\left\{\frac{T(\mathbf{p})}{T(\mathbf{q})}:
\mathbf{p},\mathbf{q}\in\Omega, \textrm{$\mathbf{p}$ is UE-SD and $\mathbf{q}$ is SO-SD}\right\}.
\]
Let $\mathcal{I}$ be the set of all instances of the transportation game with stochastic demands,
then the PoA of the problem of transportation game with stochastic demands is defined as
\[
\mbox{PoA}(\mathcal{I}):=\max_{(G, \mathbf{D}, \mathbf{c})\in\mathcal{I}} \mbox{PoA}(G, \mathbf{D}, \mathbf{c}).
\]
Note that even for deterministic demands (i.e., $\mathbf{D}$ is particularly deterministic), the PoA is
already unbounded if the link cost functions $\mathbf{c}$ are unrestricted \Citep{Roughgarden2002}.
In this study, we will establish upper bounds of the PoA for a fixed set $\mathcal{C}$ of link cost functions,
namely, the set of affine cost functions and that of polynomial cost functions.

\subsection{Affine cost functions}
\label{sec:affine}

Let us first consider affine link cost functions, i.e.,
\begin{equation}\label{eqn:affine_cost_function}
    c_e(x)=a_ex +b_e, \mbox{ where } a_e, b_e\geq 0, \ e\in E.
\end{equation}
Given demand distributions $\mathbf{D}$, the means $\{v_e\}$ and variances $\{\sigma_e^2\}$ of the link
flows can be derived from the link-path conservation equation \eqref{eqn:Ve} as follows:
\begin{equation}\label{eqn:affine_implied_relation}
\left\{
\begin{array}{cl}
  v_e & =\sum_{i\in I}\sum_{k\in P_i}\delta_{k,e}^ip_k^id_i, \\
  \sigma_e^2 & = \mbox{Var}\left[\sum_{i\in I,\,k\in P_i}\delta_{k,e}^i p_k^i D_i \right] \\
             & =\hbox{Var}\left[\sum_{i\in I} \delta_e^i p_e^i D_i\right]
               =\sum_{i\in I}\delta_e^i(p_e^i)^2 \sigma_i^2,
\end{array}
\right.
\end{equation}
where
\begin{equation}\label{eqn:def_p-e-i}
    p_e^i=\sum_{k\in P_i}\delta_{k,e}^i p_k^i
\end{equation}
and the last equality is obtained from
the independence of the demands of different O-D pairs. Let $\Omega_0 \subseteq \Omega$ be the set
of those $\mathbf{p}\in\Omega$ that additionally satisfy \eqref{eqn:affine_implied_relation} and
\eqref{eqn:def_p-e-i}. According to Proposition~\ref{pro:VI-min},
the (unique) UE-SD with affine cost functions is the same as the optimal solution to the following problem:
\begin{equation}\label{eqn:UE-SD-linear}
\underset{\mathbf{p}\in\Omega_0}\min \,\sum_{e\in E} \left(\frac{a_e}{2}v_e^2+b_ev_e\right).
\end{equation}
On the other hand, the (unique) SO-SD strategy solves the following problem:
\begin{equation}\label{eqn:SO-SD-linear}
\displaystyle \underset{\mathbf{p}\in\Omega_0} \min\, \underset{e\in E}
 \sum \left({a_e}(v_e^2+\sigma_e^2)+b_ev_e\right).
\end{equation}
Before proceeding, let us consider the following problem with $\alpha>0$ constant:
\begin{equation}\label{eqn:Hv}
\underset{\mathbf{p}\in\Omega_0}\min \, H(\mathbf{v})=\underset{e\in E}
  \sum \left(\alpha\cdot a_e v_e^2+b_e v_e \right).
\end{equation}
\begin{lem}\label{lem:Hv}
Given any instance $(G, \mathbf{D}, \mathbf{c})$ of the transportation game with affine link cost functions
$\mathbf{c}$. Routing choice strategy $\mathbf{p}\in\Omega$ is a UE-SD if and only if it is an optimal
solution to problem \eqref{eqn:Hv} for instance $(G,\bar{\mathbf{D}},c)$ with $\bar{\mathbf{D}}=\mathbf{D}
/(2\alpha)$.
\end{lem}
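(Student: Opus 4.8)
The plan is to reduce the lemma to a one-line scaling identity between the two objective functions. First I would invoke Proposition~\ref{pro:VI-min}: applied to the affine cost functions $\mathbf{c}$, it says that $\mathbf{p}\in\Omega$ is a UE-SD for $(G,\mathbf{D},\mathbf{c})$ if and only if $\mathbf{p}$ minimizes $\sum_{e\in E}\int_0^{v_e}c_e(x)\,dx=\sum_{e\in E}\bigl(\tfrac{a_e}{2}v_e^2+b_ev_e\bigr)$ over $\Omega_0$, where $v_e=\sum_{i\in I}\sum_{k\in P_i}\delta_{k,e}^i p_k^i d_i$; this is precisely problem \eqref{eqn:UE-SD-linear}.

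Next I would write out what problem \eqref{eqn:Hv} becomes for the rescaled instance $(G,\bar{\mathbf{D}},\mathbf{c})$ with $\bar{\mathbf{D}}=\mathbf{D}/(2\alpha)$. Scaling the demand rescales each mean demand to $\bar d_i=d_i/(2\alpha)$, and since the mean link flow of edge $e$ is linear in the mean demands for a fixed probability vector $\mathbf{p}$, the mean link flow under $\bar{\mathbf{D}}$ is $\bar v_e=v_e/(2\alpha)$ for every $\mathbf{p}\in\Omega$. Substituting into the objective of \eqref{eqn:Hv} for this instance gives $H(\bar{\mathbf{v}})=\sum_{e\in E}\bigl(\alpha a_e\bar v_e^2+b_e\bar v_e\bigr)=\tfrac{1}{2\alpha}\sum_{e\in E}\bigl(\tfrac{a_e}{2}v_e^2+b_ev_e\bigr)$, i.e.\ a $\tfrac{1}{2\alpha}$-multiple of the UE-SD objective in \eqref{eqn:UE-SD-linear}.

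Finally, since $\alpha>0$ the two problems differ only by the positive constant factor $1/(2\alpha)$ in the objective while sharing the same feasible set $\Omega_0$ (which, being a set of probability vectors $\mathbf{p}$, is unaffected by the change of demand — the quantities $v_e,\sigma_e^2,p_e^i$ are merely derived from $\mathbf{p}$), so their sets of optimal solutions coincide exactly. Combining this with the characterization from Proposition~\ref{pro:VI-min} gives the lemma. I do not expect a real obstacle; the only two points needing care are (i) applying the rescaling consistently inside $H$ when the instance changes, and (ii) noting that it is the minimizer set, not just the minimum value, that is preserved under multiplication of the objective by a positive constant — which is immediate.
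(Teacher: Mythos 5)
Your proposal is correct and follows essentially the same route as the paper's own proof: invoke Proposition~\ref{pro:VI-min} to characterize the UE-SD as the minimizer of \eqref{eqn:UE-SD-linear}, observe that the feasible set $\Omega_0$ is unchanged under the demand rescaling, and note that $\bar v_e=v_e/(2\alpha)$ makes $H(\bar{\mathbf{v}})$ a $1/(2\alpha)$-multiple of the UE-SD objective, so the minimizer sets coincide. No further comment is needed.
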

\begin{proof}
First note that both problem \eqref{eqn:UE-SD-linear} for $(G, \mathbf{D}, \mathbf{c})$ and problem
\eqref{eqn:Hv} for instance $(G,\mathbf{D}/(2\alpha),\mathbf{c})$ have the same feasible region $\Omega_0$ as
$\mathbf{\mathbf{p}}$ only plays a role of linking $\{v_e\}$ and $\{\sigma_e^2\}$ with $\{d_i\}$ and
$\{\sigma_i^2\}$ in \eqref{eqn:affine_implied_relation} (i.e., $\Omega_0 = \Omega_0(\mathbf{D})=\Omega_0(\bar{\mathbf{D}})$).
On the other hand, since
$\bar{v}_e={v_e}/({2\alpha})$, we have
\[
H(\mathbf{\bar{v}})= \frac{1}{2\alpha}\underset{e\in E}\sum \left(\frac{a_e}{2}v_e^2+b_ev_e\right).
\]
In other words, the objectives of the two problems differ only by a constant $1/(2\alpha)$. Therefore,
they have the same optimal solution.
\end{proof}

Lemma~\ref{lem:Hv} provides us with a parametric ($\alpha$) function $H(\cdot)$ to quantify a UE-SD.
Let us start with a lower bound.

\begin{lem}\label{lem:H_lb}
Let $(G, \mathbf{D}, \mathbf{c})$ be a transportation game with stochastic demands and affine link
costs \eqref{eqn:affine_cost_function}. Let $\mathbf{p}^*$ be the optimal solution to convex program
\eqref{eqn:Hv} for $(G, \mathbf{D}, \mathbf{c})$ and $\mathbf{v}^{\ast}$ be the corresponding vector of the
mean link flows. Then for any $\mathbf{p}\in \Omega$, the corresponding vector $\mathbf{v}_\beta$ of
the mean link flows for $(G,\beta\mathbf{D},\mathbf{c})$ for some fixed $\beta>1$ satisfies the following inequality:
\begin{equation} \label{eqn:H_lb}
H(\mathbf{v}_\beta)\geq H(\mathbf{v}^\ast) + (\beta-1)\underset{e\in E}\sum v_e^\ast h_e'(v_e^\ast),
\end{equation}
where $h_e(x)=\alpha \cdot a_ex^2+b_e x$ and $h_e'(\cdot)$ is the derivative of $h_e(\cdot)$ for $e\in E$.
\end{lem}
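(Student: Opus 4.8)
The plan is to exploit the convexity of $H$ together with the fact that $\mathbf{v}^\ast$ is the minimizer of $H$ over $\Omega_0(\mathbf{D})$, and to compare this with the feasible region $\Omega_0(\beta\mathbf{D})$ for the scaled instance. The key observation is a scaling relation between the two feasible regions: if $\mathbf{p}\in\Omega$ and $\mathbf{v}_\beta$ is the corresponding mean link flow vector for $(G,\beta\mathbf{D},\mathbf{c})$, then by \eqref{eqn:affine_implied_relation} we have $\mathbf{v}_\beta = \beta\mathbf{v}$, where $\mathbf{v}$ is the mean link flow vector associated with the same $\mathbf{p}$ for the original instance $(G,\mathbf{D},\mathbf{c})$. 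In particular $\mathbf{v}/\beta$ lies in the set of achievable mean link flow vectors for $(G,\mathbf{D},\mathbf{c})$. Moreover, $\beta\mathbf{v}^\ast$ is an achievable mean link flow vector for $(G,\beta\mathbf{D},\mathbf{c})$, and since the set of achievable mean flow vectors is convex, any convex combination of $\beta\mathbf{v}^\ast$ and $\mathbf{v}_\beta=\beta\mathbf{v}$ is achievable for the scaled instance; equivalently, convex combinations of $\mathbf{v}^\ast$ and $\mathbf{v}$ are achievable for the original instance.

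First I would write $H(\mathbf{v}) = \sum_{e\in E} h_e(v_e)$ with $h_e(x)=\alpha a_e x^2 + b_e x$ convex, and invoke the first-order optimality of $\mathbf{v}^\ast$ for the convex program $\min_{\mathbf{p}\in\Omega_0}H(\mathbf{v})$: for every achievable mean flow vector $\mathbf{w}$ for $(G,\mathbf{D},\mathbf{c})$,
\begin{equation*}
\sum_{e\in E} h_e'(v_e^\ast)\,(w_e - v_e^\ast) \geq 0.
\end{equation*}
Applying this with $\mathbf{w}=\mathbf{v}$ (the mean flow vector for $\mathbf{p}$ under $(G,\mathbf{D},\mathbf{c})$) gives $\sum_e h_e'(v_e^\ast) v_e \geq \sum_e h_e'(v_e^\ast) v_e^\ast$. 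Then I would use convexity of each $h_e$ in the form $h_e(\beta v_e) \geq h_e(v_e^\ast) + h_e'(v_e^\ast)(\beta v_e - v_e^\ast)$, sum over $e$, substitute $H(\mathbf{v}_\beta)=\sum_e h_e(\beta v_e)$, and regroup the linear terms as
\begin{equation*}
\sum_{e\in E} h_e'(v_e^\ast)(\beta v_e - v_e^\ast) = \beta\sum_{e\in E} h_e'(v_e^\ast)v_e - \sum_{e\in E} h_e'(v_e^\ast)v_e^\ast \geq (\beta-1)\sum_{e\in E} h_e'(v_e^\ast)v_e^\ast,
\end{equation*}
where the inequality uses the optimality bound $\sum_e h_e'(v_e^\ast)v_e \geq \sum_e h_e'(v_e^\ast)v_e^\ast$ together with $\beta>1$ (so that the factor $\beta>0$ preserves the direction and the leftover $-(1)\sum_e h_e'(v_e^\ast)v_e^\ast$ combines correctly). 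Combining yields $H(\mathbf{v}_\beta) \geq H(\mathbf{v}^\ast) + (\beta-1)\sum_{e\in E} v_e^\ast h_e'(v_e^\ast)$, which is \eqref{eqn:H_lb}.

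The main obstacle I anticipate is justifying the scaling identity $\mathbf{v}_\beta = \beta\mathbf{v}$ cleanly and confirming that the relevant optimality/variational inequality is stated over the set of achievable mean link flow vectors rather than directly over $\Omega_0$; one must check that $\Omega_0(\mathbf{D})$ and $\Omega_0(\beta\mathbf{D})$ carry the same underlying $\mathbf{p}$-set (as already noted in the proof of Lemma~\ref{lem:Hv}) and that the first-order condition for the convex program \eqref{eqn:Hv}, pulled back through the affine map $\mathbf{p}\mapsto\mathbf{v}$, gives exactly the inner-product inequality used above. A secondary point to handle with care is that $h_e'(v_e^\ast)\geq 0$ is not needed termwise — only the aggregate inequality $\sum_e h_e'(v_e^\ast)v_e\geq\sum_e h_e'(v_e^\ast)v_e^\ast$ — so the argument goes through even though individual derivatives could in principle interact with the sign of $(\beta-1)$; since $\beta>1$ this is exactly where the hypothesis $\beta>1$ is consumed.
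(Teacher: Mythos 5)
Your proposal is correct and follows essentially the same argument as the paper: linearize the convex functions $h_e$ at $\mathbf{v}^\ast$, invoke the first-order optimality condition of the convex program \eqref{eqn:Hv} (which the paper pulls back through the affine map $\mathbf{p}\mapsto\mathbf{v}$ exactly as you anticipate, yielding $\sum_e h_e'(v_e^\ast)v_e\geq\sum_e h_e'(v_e^\ast)v_e^\ast$), and use the scaling identity $\mathbf{v}_\beta=\beta\mathbf{v}$ from \eqref{eqn:affine_implied_relation} to regroup the linear terms into $(\beta-1)\sum_e v_e^\ast h_e'(v_e^\ast)$. The only cosmetic difference is that you state the variational inequality over achievable mean link-flow vectors while the paper writes it in $\mathbf{p}$-space and carries out the chain-rule computation explicitly.
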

\begin{proof}
Since $h_e(\cdot)$ is convex, we have a lower bound on the linear approximation at the point $v_e^\ast$
\[
h_e(v_e)\geq h_e(v_e^{\ast})+(v_e-v_e^{\ast}) h_e'(v_e^{\ast}),\quad  \forall\ e\in E,
\]
which leads to
\begin{equation}\label{L32}
H(\mathbf{v})\geq H(v^\ast)+ \underset{e\in E}\sum (v_e-v_e^{\ast}) h_e'(v_e^{\ast}).
\end{equation}
Since $\mathbf{p}^{\ast}$ is optimal for convex program \eqref{eqn:Hv}, the first order optimality condition
gives
\begin{equation}\label{L33}
(\mathbf{p}-\mathbf{p}^\ast)^T \nabla_{\mathbf{p}^\ast} H(\mathbf{v}^\ast)\geq 0.
\end{equation}
On the other hand, since $\mathbf{v}^\ast$ and $\mathbf{v}_\beta/{\beta}$ are the corresponding vectors of the
mean link flows for (the same) game $(G, \mathbf{D}, \mathbf{c})$,
respectively to strategies $\mathbf{p}^\ast$ and $\mathbf{p}$, from relations
\eqref{eqn:affine_implied_relation} we obtain, for any $e\in E$,
$v_e^\ast=\sum_{i\in I\,k\in P_i}\delta_{k,e}^i(p_k^i)^\ast d_i$ and
${v_e}/{\beta}=\sum_{i\in I\,k\in P_i}\delta_{k,e}^i p_k^i d_i$.
Hence
\begin{align*}
(\mathbf{p}-\mathbf{p}^\ast)^T \nabla_{\mathbf{p}^\ast} H(\mathbf{v}^\ast)&=\underset{i\in I}\sum \underset{k\in P_i}\sum \left(p_k^i-(p_k^i)^\ast \right)\frac{\partial H(\mathbf{v}^\ast)}{\partial (p_k^i)^\ast} \nonumber\\
&=\underset{i\in I}\sum \underset{k\in P_i}\sum \left[(p_k^i-(p_k^i)^\ast)\underset{e\in E}\sum\frac{\partial H(\mathbf{v}^\ast)}{\partial v_e^\ast}\cdot \frac{\partial v_e^\ast}{\partial(p_k^i)^\ast}\right] \nonumber\\
&=\underset{e\in E}\sum \left[\frac{\partial H(\mathbf{v}^\ast)}{\partial v_e^\ast}\cdot \underset{i\in I}\sum \underset{k\in P_i}\sum (p_k^i-(p_k^i)^\ast) \delta_{k,e}^id_i\right] \nonumber\\
&=\underset{e\in E}\sum \frac{\partial H(\mathbf{v}^\ast)}{\partial v_e^\ast}\cdot \left(\frac{v_e}{\beta}-v_e^\ast\right),
\end{align*}
which together with \eqref{L33} implies
\[
\underset{e\in E}\sum h_e'(v_e^{\ast})v_e\geq \beta \underset{e\in E}\sum h_e'(v_e^{\ast})v_e^{\ast},
\]
which together with \eqref{L32} implies \eqref{eqn:H_lb}.
\end{proof}

The following lemma establishes two functions of the mean link flows to bound the expected total cost
of the entire network.

\begin{lem}\label{lem:TC_bounds}
Let $(G, \mathbf{D}, \mathbf{c})$ be a transportation game with stochastic demands and affine link
costs \eqref{eqn:affine_cost_function}. Let $\mathbf{p}\in\Omega$ be any feasible routing choice
strategy. Then the expected total cost $T(\mathbf{p})$ is bounded from both below and above as follows (see Section~\ref{sec:with_notation} for notation):
\[
\sum_{e\in E}\left(\left(1+\frac{{\underline{\theta}}^2}{n}\right)a_ev_e^2+b_e v_e \right) \leq T(\mathbf{p})\leq\underset{e\in E}\sum \left((1+{\bar{\theta}}^2)a_ev_e^2+b_e v_e \right).
\]
\end{lem}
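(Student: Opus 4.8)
The plan is to reduce the claim to a pair of termwise (in $e\in E$) estimates relating the link-flow variance $\sigma_e^2$ to the mean link flow $v_e$, and then to sum over $e$. Since the cost functions are affine, $c_e(V_e)V_e = a_eV_e^2 + b_eV_e$, and $\mathbb{E}[V_e^2] = \mathrm{Var}[V_e] + (\mathbb{E}[V_e])^2 = \sigma_e^2 + v_e^2$, so
\[
T(\mathbf{p}) = \sum_{e\in E}\bigl(a_e\,\mathbb{E}[V_e^2] + b_e\,\mathbb{E}[V_e]\bigr) = \sum_{e\in E}\bigl(a_e(v_e^2+\sigma_e^2) + b_ev_e\bigr),
\]
which is exactly the objective appearing in \eqref{eqn:SO-SD-linear}. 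Because $a_e\ge 0$ and $b_ev_e$ is common to all three expressions, it suffices to prove the pointwise sandwich $\dfrac{\underline{\theta}^2}{n}\,v_e^2 \le \sigma_e^2 \le \bar{\theta}^2\,v_e^2$ for every $e\in E$.

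\textbf{Step 2 (setup) and Step 3 (the two inequalities).} Fix $e\in E$. By \eqref{eqn:affine_implied_relation} and \eqref{eqn:def_p-e-i} we have $v_e = \sum_{i\in I}\delta_e^i p_e^i d_i$ and $\sigma_e^2 = \sum_{i\in I}\delta_e^i(p_e^i)^2\sigma_i^2$. Set $x_i := \delta_e^i p_e^i d_i \ge 0$; using $\sigma_i = \theta_i d_i$ this gives $v_e = \sum_{i\in I}x_i$ and $\sigma_e^2 = \sum_{i\in I}\delta_e^i\theta_i^2 x_i^2$, where both sums range effectively over the $n_e$ commodities with $\delta_e^i=1$ (for the others $x_i=0$), and on that index set $\underline{\theta}\le\theta_i\le\bar{\theta}$. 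For the upper bound, $\sigma_e^2 \le \bar{\theta}^2\sum_i x_i^2 \le \bar{\theta}^2\bigl(\sum_i x_i\bigr)^2 = \bar{\theta}^2 v_e^2$, the middle step holding since the dropped cross terms $\sum_{i\ne j}x_ix_j$ are nonnegative. For the lower bound, $\sigma_e^2 \ge \underline{\theta}^2\sum_i x_i^2 \ge \dfrac{\underline{\theta}^2}{n_e}\bigl(\sum_i x_i\bigr)^2 = \dfrac{\underline{\theta}^2}{n_e}v_e^2 \ge \dfrac{\underline{\theta}^2}{n}v_e^2$, where the second inequality is Cauchy--Schwarz (equivalently QM--AM) over the $n_e$ nonzero summands and the last uses $n_e\le n$.

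\textbf{Step 4 (assemble) and obstacle.} Substituting these bounds on $\sigma_e^2$ into the expression for $T(\mathbf{p})$ from Step 1 and summing over $e\in E$ with $a_e\ge 0$ yields the stated two-sided estimate. I do not anticipate a genuine obstacle; the only points that require care are (i) restricting the index set in Step 2 to the $n_e$ commodities that actually traverse $e$, so that the Cauchy--Schwarz constant is $n_e$ rather than $|I|$, after which one relaxes to $n$; and (ii) starting from the identity $\mathbb{E}[V_e^2]=v_e^2+\sigma_e^2$ together with the explicit formula for $\sigma_e^2$ in \eqref{eqn:affine_implied_relation}, which is what makes the whole argument termwise in $e$.
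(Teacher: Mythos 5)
Your proof is correct and follows essentially the same route as the paper's: both reduce the claim to the pointwise sandwich $\underline{\theta}^2 v_e^2/n \le \sigma_e^2 \le \bar{\theta}^2 v_e^2$ obtained from \eqref{eqn:affine_implied_relation}, using nonnegative cross terms for the upper bound and Cauchy--Schwarz over the $n_e\le n$ commodities using link $e$ for the lower bound. The only difference is that you spell out the identity $T(\mathbf{p})=\sum_{e}\bigl(a_e(v_e^2+\sigma_e^2)+b_ev_e\bigr)$, which the paper leaves implicit via \eqref{eqn:SO-SD-linear}.
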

\begin{proof}
Noticing that $\theta_i=\sigma_i/d_i$, from \eqref{eqn:affine_implied_relation} we bound $\sigma_e^2$ from
above:
\begin{eqnarray}
\sigma_e^2 &= &\underset{i\in I}\sum\delta_e^i(p_e^i)^2\theta_i^2 d_i^2
\leq \left(\underset{i\in I}\max\{\theta_i\}\right)^2 \underset{i\in I}\sum\delta_e^i(p_e^i d_i)^2 \nonumber\\
&\leq & \overline{\theta}^2\left(\sum_{i\in I,\,k\in P_i}\delta_{k,e}^ip_k^i d_i\right)^2
= \overline{\theta}^2 v_e^2, \label{eqn:sigma_ub}
\end{eqnarray}
and bound $\sigma_e^2$ from below:
\begin{eqnarray}
\sigma_e^2 &\geq & \left(\min_{i\in I}\{\theta_i\}\right)^2 \underset{i\in I}\sum\delta_e^i(p_e^i d_i)^2
\geq\ \frac{\underline{\theta}^2}{n_e} \left(\underset{i\in I}\sum \delta_e^ip_e^id_i\right)^2
  \nonumber \\
&=& \frac{\underline{\theta}^2}{n_e}\left(\sum_{i\in I,\,k\in P_i}\delta_{k,e}^i p_k^id_i
   \right)^2
\geq  \frac{\underline{\theta}^2}{n} v_e^2, \label{eqn:sigma_lb}
\end{eqnarray}
where the second inequality in \eqref{eqn:sigma_lb} follows from Cauchy-Schwarz inequality.
Combination of the upper and lower bounds \eqref{eqn:sigma_ub} and \eqref{eqn:sigma_lb}
leads directly to the desired inequalities in the lemma.
\end{proof}

Now we are ready to present our first main result.

\begin{thm}\label{thm:PoA1}
Let $(G, \mathbf{D}, \mathbf{c})$ be a transportation game with stochastic demands and affine link
costs \eqref{eqn:affine_cost_function}. Then
\[
\mbox{PoA}(G,\mathbf{D},\mathbf{c})\leq \frac{4(1+{\bar{\theta}}^2)(n+{\underline{\theta}}^2)}{(3n+4{\underline{\theta}}^2)}.
\]
\end{thm}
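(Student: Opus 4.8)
The plan is to pit the variance estimates of Lemma~\ref{lem:TC_bounds} against the variational-inequality characterisation of the UE-SD, carrying a free parameter through the elementary estimates so that it can be matched to the variance constants only at the very end. Write $\gamma_1=1+\overline{\theta}^2$ and $\gamma_2=1+\underline{\theta}^2/n$, so that $\gamma_1\ge\gamma_2\ge 1$. Fix any UE-SD $\mathbf p^U$ with mean link flows $\mathbf v^U$ and any SO-SD $\mathbf p^S$ with mean link flows $\mathbf v^S$, and set $A^U=\sum_{e\in E}a_e(v_e^U)^2$, $B^U=\sum_{e\in E}b_e v_e^U$, with $A^S,B^S$ defined analogously from $\mathbf v^S$. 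Lemma~\ref{lem:TC_bounds} then gives
\[
T(\mathbf p^U)\ \le\ \gamma_1 A^U+B^U\ =:\ \overline T^{\,U},\qquad T(\mathbf p^S)\ \ge\ \gamma_2 A^S+B^S,
\]
so it suffices to bound the ratio $\overline T^{\,U}/(\gamma_2 A^S+B^S)$.

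For the numerator I would use the VI. Because the link costs are affine, the expected path costs coincide with the path costs evaluated at the mean flows, $\mathbb E[c_k^i(\mathbf F)]=\sum_{e\in E}\delta_{k,e}^i c_e(v_e)$, so the UE-SD condition \eqref{eqn:UE-SD-VI} (equivalently, the first-order optimality condition exploited in the proof of Proposition~\ref{pro:VI-min}) reduces to $\sum_{e\in E}c_e(v_e^U)(v_e-v_e^U)\ge 0$ for every mean link flow vector $\mathbf v$ induced by some $\mathbf p\in\Omega$; apply it with $\mathbf v=\mathbf v^S$. Splitting off the excess-variance term in $\overline T^{\,U}$ and then applying, edge by edge, the elementary bound $v_e^U v_e^S\le\mu(v_e^S)^2+\tfrac{1}{4\mu}(v_e^U)^2$ (valid for every $\mu>0$), we obtain
\begin{align*}
\overline T^{\,U}
  &= (\gamma_1-1)A^U+\sum_{e\in E}c_e(v_e^U)v_e^U
   \ \le\ (\gamma_1-1)A^U+\sum_{e\in E}\bigl(a_e v_e^U v_e^S+b_e v_e^S\bigr)\\
  &\le\ \Bigl(\gamma_1-1+\tfrac{1}{4\mu}\Bigr)A^U+\mu A^S+B^S.
\end{align*}
Since $A^U\le\overline T^{\,U}/\gamma_1$ (because $B^U\ge 0$), for any $\mu>1/4$ this rearranges to $\overline T^{\,U}\le\frac{4\mu\gamma_1}{4\mu-1}\bigl(\mu A^S+B^S\bigr)$.

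It remains to choose $\mu$. Taking $\mu=\gamma_2$ makes the right-hand factor $\mu A^S+B^S$ equal to the lower bound $\gamma_2 A^S+B^S\le T(\mathbf p^S)$, so
\[
\mbox{PoA}(G,\mathbf D,\mathbf c)\ \le\ \frac{\overline T^{\,U}}{\gamma_2 A^S+B^S}\ \le\ \frac{4\gamma_1\gamma_2}{4\gamma_2-1},
\]
and inserting $\gamma_1=1+\overline{\theta}^2$, $\gamma_2=1+\underline{\theta}^2/n$ and clearing denominators by $n$ yields precisely $4(1+\overline{\theta}^2)(n+\underline{\theta}^2)/(3n+4\underline{\theta}^2)$. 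I expect the main difficulty to be bookkeeping rather than conceptual: one must confirm that the affine-cost VI is genuinely available in the ``mean link flow'' form used above and that $\mathbf v^S$ is an admissible test vector, resist the temptation to fix $\mu=1/2$ (which only reproduces the cruder bound $\tfrac43(1+\overline{\theta}^2)$), and dispose of the degenerate instance $A^S=B^S=0$ separately. The specialisation $\overline{\theta}=\underline{\theta}=0$, which collapses the bound to the classical $4/3$, is a convenient consistency check; Lemmas~\ref{lem:Hv} and~\ref{lem:H_lb} also provide an alternative route through the demand-scaled game, but the VI argument above is the more direct one.
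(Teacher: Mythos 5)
Your argument is correct and arrives at exactly the paper's constant, but by a genuinely different route. The paper also sandwiches $T$ via Lemma~\ref{lem:TC_bounds}, but its core comparison of the two quadratic forms goes through the auxiliary program \eqref{eqn:Hv}: Lemma~\ref{lem:Hv} identifies the UE-SD with the minimizer of $H$ (with $\alpha=1+\underline{\theta}^2/n$) for the demand-scaled instance $(G,\mathbf{D}/(2\alpha),\mathbf{c})$, and Lemma~\ref{lem:H_lb} applied with $\beta=2\alpha$ then yields $H(\mathbf{v}^S)\ge\frac{4\alpha-1}{4\alpha^2}H(\mathbf{v}^U)$, which is precisely the inequality your chain produces. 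You bypass Lemmas~\ref{lem:Hv} and~\ref{lem:H_lb} altogether: because affine costs commute with expectation, the VI characterization \eqref{eqn:UE-SD-VI} of the UE-SD may legitimately be tested with the SO-SD mean flows (they are induced by $\mathbf{p}^S\in\Omega$, hence feasible), and the parameterized estimate $v_e^U v_e^S\le\mu(v_e^S)^2+(v_e^U)^2/(4\mu)$, the self-bounding step $A^U\le\overline{T}^{\,U}/\gamma_1$, and the choice $\mu=\gamma_2$ deliver the same constant. What your route buys is economy and transparency: it needs only the VI, Lemma~\ref{lem:TC_bounds} and one AM--GM step, and it makes clear why $\gamma_2=1+\underline{\theta}^2/n$ is the right parameter value (it is the largest $\mu$ for which $\mu A^S+B^S\le T(\mathbf{p}^S)$ is certified, and $4\mu\gamma_1/(4\mu-1)$ is decreasing in $\mu$). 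What the paper's route buys is reuse of structure: the $H$-function and scaled-demand lemmas package the first-order information of the UE-SD once and for all, in the spirit of the Roughgarden--Tardos argument for deterministic demands. Two cosmetic points only: the degenerate case $\gamma_2A^S+B^S=0$ that you set aside is equally glossed over in the paper and is harmless (every used link then contributes zero cost, so both $T$ values vanish); and in your aside it is $\mu=1$, not $\mu=1/2$, that reproduces the cruder bound $\tfrac43(1+\overline{\theta}^2)$ --- neither point affects the proof.
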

\begin{proof}
Fix $\alpha=1+\underline{\theta}^2 /n$ in the definition of $H(\mathbf{v})$ in \eqref{eqn:Hv}. Let
$\mathbf{p}\in\Omega$ be the UE-SD strategy for $(G,\mathbf{D},\mathbf{c})$ and $\mathbf{v}$ be the
vector of the corresponding mean link flows. According to Lemma~\ref{lem:Hv}, $\mathbf{p}$ is also
the optimal solution to convex program \eqref{eqn:Hv} for $(G, \mathbf{D}/(2\alpha),\mathbf{c})$, for which
the vector of the corresponding mean link flows is $\mathbf{v}/(2\alpha)$.

Let $\mathbf{p}^*\in\Omega$ be the SO-SD strategy for $(G, \mathbf{D}, \mathbf{c})$ and $\mathbf{v}^\ast$ be the
vector of the corresponding mean link flows. By applying Lemma~\ref{lem:H_lb} with $\beta=2\alpha$,
we have $\mathbf{v}^* = \mathbf{v}_\beta$ and
\begin{eqnarray}
  H(\mathbf{v}^\ast) &\ge& H\hspace{-2pt}\left(\frac{\mathbf{v}}{2\alpha}\right)
          +\left(2\alpha-1\right)\sum_{e\in E}\left(\frac{v_e}{2\alpha}\right)
          h_e'\hspace{-2pt}\left(\frac{v_e}{2\alpha}\right)    \nonumber       \\
   &=&  \sum_{e\in E} \left(\alpha a_e\frac{v_e^2}{4\alpha^2}+b_e\frac{v_e}{2\alpha}\right)
          + \frac{2\alpha-1}{2\alpha}\underset{e\in E}\sum(a_ev_e^2+b_ev_e)   \nonumber \\
   &\ge&  \frac{1}{4\alpha^2}\sum_{e\in E}\left(\alpha a_e v_e^2+b_ev_e \right)
          + \frac{2\alpha-1}{2\alpha^2}\underset{e\in E}\sum \left(\alpha a_e v_e^2+b_ev_e \right) \nonumber \\
   &=& \frac{4\alpha-1}{4\alpha^2}\,H(\mathbf{v}). \label{eqn:proof_thm_PoA}
\end{eqnarray}
Now applying Lemma~\ref{lem:TC_bounds} and noticing $\alpha \le 1+{\bar{\theta}}^2$, we obtain
\begin{eqnarray*}
\mbox{PoA}(G,\mathbf{D},\mathbf{c}) &=&  \frac{T(\mathbf{p})}{T(\mathbf{p}^*)}
          \le \frac{\sum_{e\in E} \left((1+{\bar{\theta}}^2) a_ev_e^2+b_ev_e\right)}{\sum_{e\in E}
          \left(\alpha a_e(v_e^\ast)^2+b_ev_e^\ast\right)}   \\
   &\le&  \frac{1+{\bar{\theta}}^2}{\alpha}\,\frac{\sum_{e\in E}(\alpha a_ev_e^2+b_ev_e)}
          {\sum_{e\in E}(\alpha a_e(v_e^\ast)^2+b_ev_e^\ast)}
          = \frac{1+{\bar{\theta}}^2}{\alpha}\,\frac{H(\mathbf{v})}{H(\mathbf{v}^\ast)} \\
   &\le&  \frac{1+{\bar{\theta}}^2}{\alpha}\,\frac{4\alpha^2}{4\alpha-1}
          = \frac{4(1+{\bar{\theta}}^2)(n+{\underline{\theta}}^2)}{(3n+4{\underline{\theta}}^2)},
\end{eqnarray*}
where the last inequality follows from inequality \eqref{eqn:proof_thm_PoA}.
\end{proof}

\noindent\textbf{Remarks.} To some extent the upper bound on PoA in Theorem \ref{thm:PoA1} is tight as
demonstrated in the following example, where all coefficients of variation of demand distributions are equal
and each link is in at most one path of each O-D pair.

\begin{eg}\label{eg:affine}
 Consider a single commodity network as shown in Figure~\ref{fig:pigou_affine}, in which the stochastic demand
 $X$ has a mean of $d$ and variance of $\sigma^2$. Denote $\theta= \sigma /d$ as the coefficient of variation.
 The link cost on the upper link is a constant of the mean demand and that on the lower link is the traffic amount.

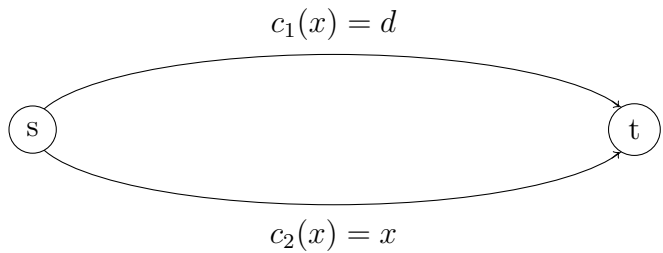
\begin{figure}[H]
\centering
\begin{tikzpicture}
\draw (0,0) arc (0:360:4 and 1);
\draw [->] (-0.3,0.38)--(-0.2,0.3);
\draw [->] (-0.3,-0.38)--(-0.2,-0.3);
\node at (0,0) [circle,draw,fill=white] {t};
\node at(-8,0)[circle,draw,fill=white] {s};
\node at(-4, -1.4){$c_2(x)=x$};
\node at(-4, 1.4){$c_1(x)=d$};
\end{tikzpicture}
\caption{Two-link network}\label{fig:pigou_affine}
\end{figure}

Since the expected travel cost on the lower link is always no more than that on the upper link, strategy
$\mathbf{p}^T=(0,1)^T$ to choose the lower link with probability $1$ is a UE-SD. The expected total cost
$T(\mathbf{p})=\mathbb{E}[X^2]=(1+\theta^2)d^2$.
Let $\mathbf{p}^\ast=(p_1^\ast,p_2^\ast)^T$ be the SO-SD strategy. Solving \eqref{eqn:SO-SD-linear}
we get
\begin{displaymath}
\begin{cases}
p_1^\ast=1-1/\left(2(1+\theta^2)\right),\\
p_2^\ast=1/\left(2(1+\theta^2)\right),
\end{cases}
\end{displaymath}
from which we obtain $T(\mathbf{p}^*)={(4\theta^2+3)d^2}/{(4(1+\theta^2))}$. Thus the value of PoA is
\[
\frac{T(\mathbf{p})}{T(\mathbf{p}^*)}=\frac{4(1+\theta^2)^2}{3+4\theta^2},
\]
\end{eg}
which matches the upper bound in Theorem~\ref{thm:PoA1} with $n=1$ and $\theta=\underline{\theta}
=\overline{\theta}$.

\subsection{Polynomial cost functions and positive-valued distributions}
\label{sec:poly}

Now let us consider polynomial link cost functions of
\begin{equation}\label{eqn:poly_cost_function}
c_e(x) =\sum_{j=0}^{m}b_{ej}x^j,\ b_{ej}\geq 0,\ j=0,1,\ldots, m;\ e\in E.
\end{equation}

Assume the traffic demand follows a positive-valued distribution and has finite $m$-th moment.
Define the following parameters
\begin{equation}\label{eqn:theta_m}
\theta_i^{(m)}=\frac{\mathbb{E}[D_i^m]}{d_i^m}, \quad \forall\ i\in I,
\end{equation}
to denote the ratio of the $m$-th moment of the demand to the mean demand to the power of $m$.
Denote $\overline{\theta}^{(m)}=\max_{i\in I}\left\{\theta_i^{(m)}\right\}$. The following lemma
establishes two inequalities between the $m$-th moment of $V_e$ and $v_e^m$ for every $e\in E$.

\begin{lem}\label{lem:moment_Ve}
For any transportation game $(G, \mathbf{D}, \mathbf{c})$ in which $\mathbf{D}$ is a vector of
positive-valued distributions, we have
\[
v_e^m \leq \mathbb{E}[V_e^m]\leq \overline{\theta}^{(m)}\cdot v_e^m,\quad e\in E.
\]
\end{lem}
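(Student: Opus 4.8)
The plan is to normalize $V_e$ so that it becomes a convex combination of the scaled demands $D_i/d_i$, and then to invoke Jensen's inequality twice: once in its expectation form for the lower bound, and once in its finite-convex-combination form for the upper bound. First I would record the structural form of $V_e$. Combining \eqref{eqn:Ve}, \eqref{eqn:flow_distribution} and \eqref{eqn:def_p-e-i} gives $V_e \cong \sum_{i\in I} p_e^i D_i$, and taking expectations, $v_e = \sum_{i\in I} p_e^i d_i$. If $v_e=0$, then $p_e^i=0$ for every $i$ (since $d_i>0$), so $V_e=0$ almost surely and both claimed inequalities hold trivially; hence I may assume $v_e>0$. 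Setting $\lambda_i := p_e^i d_i / v_e \ge 0$, one has $\sum_{i\in I}\lambda_i = 1$ and
\[
\frac{V_e}{v_e} = \sum_{i\in I}\lambda_i\,\frac{D_i}{d_i}.
\]

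For the lower bound, since $m\ge 1$ and the $D_i$ are positive-valued, the map $t\mapsto t^m$ is convex on $\mathbb{R}^+$; applying Jensen's inequality to the nonnegative random variable $V_e/v_e$ yields $\mathbb{E}[(V_e/v_e)^m] \ge \big(\mathbb{E}[V_e/v_e]\big)^m = 1$, i.e.\ $\mathbb{E}[V_e^m]\ge v_e^m$. For the upper bound, I would observe that for every realization of $\mathbf{D}$ the quantity $V_e/v_e$ is a convex combination, with weights $\lambda_i$, of the nonnegative numbers $D_i/d_i$, so convexity of $t\mapsto t^m$ gives the pointwise bound $(V_e/v_e)^m \le \sum_{i\in I}\lambda_i (D_i/d_i)^m$. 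Taking expectations and using $\mathbb{E}[(D_i/d_i)^m] = \mathbb{E}[D_i^m]/d_i^m = \theta_i^{(m)}$ from \eqref{eqn:theta_m}, together with $\theta_i^{(m)}\le\overline{\theta}^{(m)}$, gives $\mathbb{E}[(V_e/v_e)^m] \le \sum_{i\in I}\lambda_i\theta_i^{(m)} \le \overline{\theta}^{(m)}$, that is, $\mathbb{E}[V_e^m]\le\overline{\theta}^{(m)} v_e^m$.

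This argument is short, and the only points requiring care are the degenerate case $v_e=0$ and the fact that positivity of the demand distributions is what lets us apply $t\mapsto t^m$ only on $\mathbb{R}^+$ (where it is convex even for odd $m$); I do not expect a genuine obstacle. If one preferred to avoid the pointwise Jensen step, an alternative route is the multinomial expansion $\mathbb{E}[V_e^m] = \sum_{|\alpha|=m}\binom{m}{\alpha}\prod_{i}(p_e^i)^{\alpha_i}\prod_i\mathbb{E}[D_i^{\alpha_i}]$, using independence of the $D_i$ and Lyapunov's moment inequality $\mathbb{E}[D_i^{\alpha_i}]\le(\mathbb{E}[D_i^m])^{\alpha_i/m}$ for $\alpha_i\le m$, which after collecting terms reproduces the same factor $\overline{\theta}^{(m)}$; this is more computational and, unlike the first approach, genuinely uses the independence assumption.
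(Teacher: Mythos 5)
Your proof is correct, but your main argument for the upper bound is genuinely different from the paper's. The paper expands $\mathbb{E}[V_e^m]=\mathbb{E}\bigl[\bigl(\sum_{i}p_e^iD_i\bigr)^m\bigr]$ via the multinomial theorem, invokes independence of the demands to factor the mixed moments, and then controls each term with the super-multiplicativity $\theta_i^{(m)}\geq\theta_i^{(s)}\theta_i^{(t)}$ of \eqref{eqn:theta_factorization}, which it derives from $\hbox{Cov}(D_i^s,D_i^t)\geq 0$ for positive random variables; this is essentially your sketched ``alternative route,'' with the covariance inequality playing the role you assign to Lyapunov's moment inequality. Your primary argument instead writes $V_e/v_e$ as a convex combination of the ratios $D_i/d_i$ with weights $\lambda_i=p_e^id_i/v_e$ and applies Jensen pointwise, then takes expectations and uses $\theta_i^{(m)}\leq\overline{\theta}^{(m)}$. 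This buys two things: it is shorter, and it never uses independence of the demands across O-D pairs, so the upper bound holds for arbitrarily correlated demands; you also handle the degenerate case $v_e=0$ explicitly, which the paper passes over. What the paper's route buys is the intermediate inequality \eqref{eqn:theta_factorization} itself, which it reuses later (in the proof of Theorem~\ref{thm:compute_PoA_poly}, to argue that $\overline{\theta}^{(j)}$ is nondecreasing in $j$), so its longer computation is not wasted. The lower bound via Jensen's inequality is identical in both treatments, and your remark that positivity is what makes $t\mapsto t^m$ usable for odd $m$ is exactly the role the positive-valued assumption plays.
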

\begin{proof}
The first inequality in the lemma follows from Jensen's Inequality. Next we prove the second inequality.
For any two nonnegative integers $s,t \in \mathbb{Z}_+$ satisfing $s+t=m$,
\[
\mathbb{E}[D_i^m]=\mathbb{E}[D_i^sD_i^t]=\mathbb{E}[D_i^s]\mathbb{E}[D_i^t]+\hbox{Cov}(D_i^s,D_i^t), \quad  i \in I.
\]
Since $D_i$ is a positive random variable, $\hbox{Cov}(D_i^s,D_i^t)\geq0$ (see, e.g., \Citet{Schmidt2003}). Thus
\[
\mathbb{E}[D_i^m]\geq \mathbb{E}[D_i^s]\mathbb{E}[D_i^t], \quad  i\in I,
\]
which leads to
\[ \frac{\mathbb{E}\left[D_i^m\right]}{\mathbb{E}[D_i]^m}
\geq\frac{\mathbb{E}[D_i^s]}{\mathbb{E}[D_i]^s}\frac{\mathbb{E}[D_i^t]}{\mathbb{E}[D_i]^t},\quad i\in I.
\]
With definition \eqref{eqn:theta_m} we have
\begin{equation}\label{eqn:theta_factorization}
\theta_i^{(m)}\geq \theta_i^{(s)} \cdot
\theta_i^{(t)}, \quad i\in I.
\end{equation}

For any $e\in E$, denote $S_m=\{\mathbf{s}=(s_i:i\in I)
\in \mathbb{Z}_+^{|I|}: \sum_{i\in I} s_i =m\}$. Then with \eqref{eqn:def_p-e-i} the $m$-th moment
of the link flow can be written as
\begin{eqnarray*}
\mathbb{E}[V_e^m] &=& \mathbb{E}\left[\left(\underset{i\in I}\sum p_e^i D_i\right)^m\right]
    = \mathbb{E}\left[\sum_{\mathbf{s}\in S_m}\frac{m!}{\prod_{i\in I} s_i!}
    \prod_{i\in I}(p_e^iD_i)^{s_i}\right]   \\
&=& \sum_{\mathbf{s}\in S_m}\frac{m!}{\prod_{i\in I} s_i!}
    \prod_{i\in I}\mathbb{E}\left[(p_e^iD_i)^{s_i}\right]
    = \sum_{\mathbf{s}\in S_m}\frac{m!}{\prod_{i\in I} s_i!}
    \prod_{i\in I}\theta_i^{(s_i)}(p_e^id_i)^{s_i},
\end{eqnarray*}
where the third equality is due to the independence among demands of different O-D pairs, which together
with (due to \eqref{eqn:theta_factorization})
\[
\max_{i\in I}\left\{\theta_i^{(m)}\right\}\geq\prod_{i\in I}\theta_i^{(s_i)}, \quad \mathbf{s}\in S_m
\]
implies that
\begin{eqnarray*}
\mathbb{E}[V_e^m] &\leq & \max_{i\in I}\left\{\theta_i^{(m)}\right\}
      \sum_{\mathbf{s}\in S_m}\frac{m!}{\prod_{i\in I} s_i!}\prod_{i\in I}(p_e^id_i)^{s_i} \\
&=& \overline{\theta}_i^{(m)}\left(\sum_{i\in I} p_e^id_i\right)^m
    = \overline{\theta}_i^{(m)}v_e^m,
\end{eqnarray*}
completing our proof of the lemma.
\end{proof}

As shown in \citep{Roughgarden2005-book}, the PoA in any deterministic model is bounded by
the \textit{anarchy value}, which depends only on the class of the (link) cost functions.
We extend this approach to stochastic models.

\begin{defn}\label{def:aux_cost-functions}
Let $c(x)=\sum_{j=0}^m b_jx^j$ with all $b_j\ge0$ and $\mathbf{D}$ be a
vector of positive random variables. Define:
\[
\left\{
  \begin{array}{l}
  \overline{t}(x)=\sum_{j=0}^{m} b_{j}\overline{\theta}^{(j)}x^j,\\
  \underline{t}(x)=\sum_{j=0}^{m} b_{j}x^j;
  \end{array}
\right.
\mbox{ and }
\left\{
  \begin{array}{l}
  \overline{s}(x)=\sum_{j=0}^{m} b_{j}\overline{\theta}^{(j+1)}x^{j+1}, \\
  \underline{s}(x)=\sum_{j=0}^{m} b_{j}x^{j+1}.
  \end{array}
\right.
\]
The derivative of $\underline{s}(x)$ is $\underline{s}'(x)=\sum_{j=0}^{m}(j+1) b_{j}x^{j}$.
Since $\overline{t}(x)$ and $\overline{s}'(x)$ are both functions with domain $x\in (0,+\infty)$ and range $(b_0,+\infty)$, there exists a value $\lambda(x)>0$, for any $x>0$, such that $\underline{s}'(\lambda(x) x)=\overline{t}(x)$.
Define
\begin{eqnarray*}
\mu(x) &=& \underline{s}(\lambda(x)x)/ \overline{s}(x),  \\
\phi(x) &=& x\underline{t}(x)/\overline{s}(x), \\
\eta(x) &=& x\overline{t}(x)/\overline{s}(x),
\end{eqnarray*}
with the understanding $0/0=1$.
\end{defn}
\begin{defn}
Let $c(x)=\sum_{j=0}^m b_jx^j$ with all $b_j\ge0$ and $\mathbf{D}$ be a
vector of positive random variables, such that
\begin{equation}\label{eqn:poly_function_restriction}
\underset{x>0}\min\  \{\mu(x)+\phi(x)-\eta(x)\lambda(x)\}>0.
\end{equation}
Define
\[
\gamma(c,\mathbf{D})= \underset{x>0}\sup\ [\mu(x)+\phi(x)-\eta(x)\lambda(x)]^{-1},
\]
and
\[
\gamma(\mathcal{C},\mathbf{D})=\underset{c\in \mathcal{C}}\sup\ \gamma(c, \mathbf{D}),
\]
where $\mathcal{C}$ is a subset of positive polynomial cost functions that satisfy \eqref{eqn:poly_function_restriction}.
\end{defn}

For link cost functions \eqref{eqn:poly_cost_function}, we define as in Definition~\ref{def:aux_cost-functions}
the corresponding functions $\overline{t}_e(v_e)$, $\underline{t}_e(v_e)$, $\overline{s}_e(v_e)$,
and $\underline{s}_e(v_e)$. Then it follows from Lemma~\ref{lem:moment_Ve} that
\begin{align}
&\underline{s}_e(v_e)\leq \mathbb{E}[c_e(V_e)V_e]\leq \overline{s}_e(v_e), &\forall\ e\in E,
   \label{eqn:expected_link_cost_bound0}\\
&\underline{t}_e(v_e)\leq \mathbb{E}[c_e(V_e)]\leq \overline{t}_e(v_e), &\forall\ e\in E.
    \label{eqn:expected_link_cost_bound}
\end{align}

Next lemma provides a weaker version of the UE-SD condition with only the mean link flows.

\begin{lem} \label{lem:weaker-UE-SD}
Given a transportation game $(G, \mathbf{D}, \mathbf{c})$ with polynomial (link) cost functions \eqref{eqn:poly_cost_function} and positive demand distributions. Let $\mathbf{p}^{\ast}$ be a UE-SD. Then for an arbitrary strategy $\mathbf{p}$, we have
\[
\underset{e\in E}\sum v_e\, \overline{t}_e(v_e^{\ast})\geq \underset{e\in E}\sum v_e^{\ast}\, \underline{t}_e(v_e^{\ast}),
\]
where $\mathbf{v}^{\ast}$ and $\mathbf{v}$ are the mean link flows corresponding to $\mathbf{p}^{\ast}$ and $\mathbf{p}$, respectively.
\end{lem}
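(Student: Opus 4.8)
The plan is to derive the inequality directly from the variational-inequality characterization of a UE-SD (condition \eqref{eqn:UE-SD-VI}) combined with the moment bounds \eqref{eqn:expected_link_cost_bound}. Fix an arbitrary strategy $\mathbf{p}$ with mean path flows $\mathbf{f}$ and mean link flows $\mathbf{v}$. Since $\mathbf{p}^{\ast}$ is a UE-SD, it satisfies \eqref{eqn:UE-SD-VI}, which written componentwise says
\[
\underset{i\in I}\sum\underset{k\in P_i}\sum f_k^i\,\mathbb{E}[c_k^i(\mathbf{F}^{\ast})]\ \ge\ \underset{i\in I}\sum\underset{k\in P_i}\sum (f_k^i)^{\ast}\,\mathbb{E}[c_k^i(\mathbf{F}^{\ast})].
\]

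First I would convert both sides from path quantities to link quantities. Using $c_k^i(\mathbf{F})=\sum_{e\in E}\delta_{k,e}^i c_e(V_e)$, linearity of expectation, and exchanging the order of summation,
\[
\underset{i\in I}\sum\underset{k\in P_i}\sum f_k^i\,\mathbb{E}[c_k^i(\mathbf{F}^{\ast})]
=\underset{e\in E}\sum\mathbb{E}[c_e(V_e^{\ast})]\underset{i\in I}\sum\underset{k\in P_i}\sum\delta_{k,e}^i f_k^i
=\underset{e\in E}\sum v_e\,\mathbb{E}[c_e(V_e^{\ast})],
\]
since $v_e=\sum_{i\in I}\sum_{k\in P_i}\delta_{k,e}^i p_k^i d_i=\sum_{i\in I}\sum_{k\in P_i}\delta_{k,e}^i f_k^i$, and identically with the starred quantities on the other side. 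Thus the VI condition reduces to $\sum_{e\in E} v_e\,\mathbb{E}[c_e(V_e^{\ast})]\ge\sum_{e\in E} v_e^{\ast}\,\mathbb{E}[c_e(V_e^{\ast})]$.

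Next I would invoke \eqref{eqn:expected_link_cost_bound} (itself a consequence of Lemma~\ref{lem:moment_Ve}): $\underline{t}_e(v_e^{\ast})\le\mathbb{E}[c_e(V_e^{\ast})]\le\overline{t}_e(v_e^{\ast})$ for every $e\in E$. Because $v_e\ge 0$ and $v_e^{\ast}\ge 0$ (each is a sum of nonnegative terms), I may replace $\mathbb{E}[c_e(V_e^{\ast})]$ by its upper bound $\overline{t}_e(v_e^{\ast})$ on the left-hand side and by its lower bound $\underline{t}_e(v_e^{\ast})$ on the right-hand side without changing the direction of the inequality, giving
\[
\underset{e\in E}\sum v_e\,\overline{t}_e(v_e^{\ast})\ \ge\ \underset{e\in E}\sum v_e\,\mathbb{E}[c_e(V_e^{\ast})]\ \ge\ \underset{e\in E}\sum v_e^{\ast}\,\mathbb{E}[c_e(V_e^{\ast})]\ \ge\ \underset{e\in E}\sum v_e^{\ast}\,\underline{t}_e(v_e^{\ast}),
\]
which is the claim. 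There is no real obstacle here: the only care needed is the bookkeeping in the path-to-link conversion (keeping the incidence indicators $\delta_{k,e}^i$ aligned so the swap of summation order is legitimate) and the nonnegativity of $v_e$ and $v_e^{\ast}$ that lets the moment bounds be substituted termwise; the substantive content has already been placed in the VI reformulation of UE-SD and in Lemma~\ref{lem:moment_Ve}.
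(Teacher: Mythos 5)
Your proposal is correct and is essentially the paper's own argument: the paper proves the lemma precisely by substituting the moment bounds \eqref{eqn:expected_link_cost_bound} into the VI characterization \eqref{eqn:UE-SD-VI}, and your path-to-link conversion and termwise substitution simply make that one-line proof explicit.
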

\begin{proof}
The lemma can be proved by applying \eqref{eqn:expected_link_cost_bound} into VI problem \eqref{eqn:UE-SD-VI}.
\end{proof}

As in Definition~\ref{def:aux_cost-functions}, we obtain $\lambda_e(\cdot)>0$ for each $e\in E$.
Then we have a lower bound of the expected total cost at the SO-SD in the following lemma.

\begin{lem}\label{lem:Se_lb}
Given a transportation game $(G, \mathbf{D}, \mathbf{c})$ with polynomial link cost functions \eqref{eqn:poly_cost_function} and positive demand distributions. Let $\mathbf{v}$ and
$\mathbf{v}^{\ast}\in \mathbb{R}^{|E|}$ be the vectors of mean flows at the UE-SD and the
SO-SD respectively. Then,
\[
\sum_{e\in E}\underline{s}_e(v_e^\ast)\geq \underset{e\in E}\sum (\underline{s}_e(\lambda_e(v_e)v_e)+ (v_e^{\ast}-\lambda_e(v_e)v_e)\overline{t}_e(v_e)).
\]
\end {lem}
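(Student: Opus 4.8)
The plan is to derive the claimed bound by combining the weaker UE-SD inequality (Lemma~\ref{lem:weaker-UE-SD}) with a convexity-type estimate that relates $\underline{s}_e$ at the SO-SD mean flow $v_e^\ast$ to its value at the scaled point $\lambda_e(v_e)v_e$. First I would recall from Definition~\ref{def:aux_cost-functions} that $\lambda_e(v_e)$ is defined precisely so that $\underline{s}_e'(\lambda_e(v_e)v_e)=\overline{t}_e(v_e)$. Since each $\underline{s}_e(x)=\sum_j b_{ej}x^{j+1}$ is convex on $(0,\infty)$ (all coefficients nonnegative), the tangent-line underestimate at the point $\lambda_e(v_e)v_e$ gives, for every $e\in E$,
\[
\underline{s}_e(v_e^\ast)\ \ge\ \underline{s}_e(\lambda_e(v_e)v_e)+\bigl(v_e^\ast-\lambda_e(v_e)v_e\bigr)\,\underline{s}_e'(\lambda_e(v_e)v_e)
=\underline{s}_e(\lambda_e(v_e)v_e)+\bigl(v_e^\ast-\lambda_e(v_e)v_e\bigr)\,\overline{t}_e(v_e).
\]
Summing this over $e\in E$ yields exactly the asserted inequality, so at first glance the lemma is immediate.

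The subtlety — and what I expect to be the main obstacle — is the sign of the term $(v_e^\ast-\lambda_e(v_e)v_e)\overline{t}_e(v_e)$ when one wants this bound to be \emph{useful} downstream: the tangent-line inequality above is valid regardless of signs, but to eventually chain it into a PoA bound one needs to control $\sum_e(v_e^\ast-\lambda_e(v_e)v_e)\overline{t}_e(v_e)$ from below using the UE-SD condition. Here is where Lemma~\ref{lem:weaker-UE-SD} enters: applied with the arbitrary strategy taken to be the SO-SD strategy, it gives $\sum_e v_e^\ast\,\overline{t}_e(v_e)\ge\sum_e v_e\,\underline{t}_e(v_e)$ (after swapping the roles of starred and unstarred flows, i.e. treating $\mathbf v$ as the equilibrium and $\mathbf v^\ast$ as the comparison strategy). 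Substituting this into the summed tangent inequality replaces the $v_e^\ast\overline{t}_e(v_e)$ term and leaves a clean expression in the equilibrium flows $v_e$ alone, which is the form needed for the final PoA estimate. I would need to be careful that the direction of Lemma~\ref{lem:weaker-UE-SD} matches: in its statement $\mathbf p^\ast$ is the UE-SD and $\mathbf p$ is arbitrary, so I would instantiate it with the UE-SD flows in the role of $\mathbf v^\ast$ and the SO-SD flows in the role of $\mathbf v$, giving $\sum_e v_e^{\mathrm{SO}}\,\overline t_e(v_e^{\mathrm{UE}})\ge\sum_e v_e^{\mathrm{UE}}\,\underline t_e(v_e^{\mathrm{UE}})$; matching this against the notation of the present lemma (where $\mathbf v$ is UE and $\mathbf v^\ast$ is SO) is just a relabeling.

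So the key steps, in order, are: (i) invoke convexity of $\underline{s}_e$ and write the tangent-line underestimate of $\underline{s}_e(v_e^\ast)$ at $\lambda_e(v_e)v_e$; (ii) use the defining identity $\underline{s}_e'(\lambda_e(v_e)v_e)=\overline{t}_e(v_e)$ from Definition~\ref{def:aux_cost-functions} to rewrite the slope term; (iii) sum over $e\in E$. Strictly speaking, steps (i)--(iii) already prove the lemma as stated, and Lemma~\ref{lem:weaker-UE-SD} is only needed for the subsequent bounding of the right-hand side, not for the lemma itself — so the proof in the paper may well be just the two-line argument "$\underline{s}_e$ is convex, apply the tangent-line inequality at $\lambda_e(v_e)v_e$ and sum." The only thing to double-check is that $\lambda_e(v_e)v_e>0$ lies in the domain where $\underline{s}_e$ is defined and differentiable, which holds since $v_e>0$ for an equilibrium with positive demands and $\lambda_e(\cdot)>0$ by construction.
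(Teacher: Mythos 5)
Your proposal is correct and coincides with the paper's proof: the lemma follows exactly from the tangent-line (first-order convexity) underestimate of $\underline{s}_e$ at the point $\lambda_e(v_e)v_e$, the defining identity $\underline{s}_e'(\lambda_e(v_e)v_e)=\overline{t}_e(v_e)$, and summation over $e\in E$. Your additional remarks about Lemma~\ref{lem:weaker-UE-SD} correctly identify that it is only used downstream (in Proposition~\ref{pro:upper_bound_with_positive_distributions}), not in this lemma.
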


\begin{proof}
Since $\underline{s}'_e(\lambda_e(v_e) v_e)=\overline{t}_e(v_e)$ for any $e\in E$, with convexity of
$\underline{s}_e(\cdot)$ we obtain
\begin{align*}
\underline{s}_e(v_e^\ast)\geq\ & \underline{s}_e(\lambda_e(v_e)v_e)+(v_e^{\ast}-\lambda_e(v_e)v_e)\underline{s}_e'(\lambda_e(v_e)v_e)\\
=\ & \underline{s}_e(\lambda_e(v_e)v_e)+ (v_e^{\ast}-\lambda_e(v_e)v_e)\overline{t}_e(v_e).
\end{align*}
Summing over all $e\in E$ proves the lemma.
\end{proof}

\begin{prop}\label{pro:upper_bound_with_positive_distributions}
Let $(G, \mathbf{D}, \mathbf{c})$ be a transportation game with $\mathbf{D}$ positive demand distributions
and $\mathbf{c}\in \mathcal{C}$ with $\mathcal{C}$ a set of positive polynomial cost functions as in
\eqref{eqn:poly_cost_function}. Let $\mu_e(\cdot)$, $\phi_e(\cdot)$ and $\eta_e(\cdot)$ be defined as in Definition~\ref{def:aux_cost-functions} for each $c_e(\cdot)$, such that \eqref{eqn:poly_function_restriction}
is satisfied. Then
\[
 PoA(G, \mathbf{D}, \mathbf{c})\leq \gamma(\mathcal{C}, \mathbf{D}).
\]
\end{prop}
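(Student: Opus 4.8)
The plan is to adapt the classical ``anarchy value'' / variational argument for deterministic nonatomic congestion games, replacing every exact link quantity by the appropriate envelope function supplied by Lemma~\ref{lem:moment_Ve}. Throughout, let $\mathbf{v}$ and $\mathbf{v}^{\ast}$ be the mean link flows at the UE-SD $\mathbf{p}$ and at the SO-SD $\mathbf{p}^{\ast}$, respectively. The upper bound in \eqref{eqn:expected_link_cost_bound0} gives $T(\mathbf{p})\le\sum_{e\in E}\overline{s}_e(v_e)$, and the lower bound there gives $T(\mathbf{p}^{\ast})\ge\sum_{e\in E}\underline{s}_e(v_e^{\ast})$, so it suffices to prove $\sum_{e\in E}\underline{s}_e(v_e^{\ast})\ge\gamma(\mathcal{C},\mathbf{D})^{-1}\sum_{e\in E}\overline{s}_e(v_e)$; dividing then yields $\mbox{PoA}(G,\mathbf{D},\mathbf{c})=T(\mathbf{p})/T(\mathbf{p}^{\ast})\le\gamma(\mathcal{C},\mathbf{D})$.

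To establish that inequality, first apply Lemma~\ref{lem:Se_lb} to get $\sum_{e}\underline{s}_e(v_e^{\ast})\ge\sum_{e}\bigl(\underline{s}_e(\lambda_e(v_e)v_e)+(v_e^{\ast}-\lambda_e(v_e)v_e)\overline{t}_e(v_e)\bigr)$, where $\lambda_e$ is the map of Definition~\ref{def:aux_cost-functions} applied to $c_e$. The only term on the right still depending on the SO-SD is $\sum_e v_e^{\ast}\overline{t}_e(v_e)$, and this is exactly where the equilibrium enters: Lemma~\ref{lem:weaker-UE-SD}, applied with the SO-SD strategy playing the role of the arbitrary strategy, gives $\sum_e v_e^{\ast}\overline{t}_e(v_e)\ge\sum_e v_e\underline{t}_e(v_e)$. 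Substituting, the bound becomes $\sum_e\bigl(\underline{s}_e(\lambda_e(v_e)v_e)+v_e\underline{t}_e(v_e)-\lambda_e(v_e)v_e\overline{t}_e(v_e)\bigr)$, with no remaining $\mathbf{v}^{\ast}$. Factoring $\overline{s}_e(v_e)$ out of the $e$-th summand and invoking the definitions $\mu_e(x)=\underline{s}_e(\lambda_e(x)x)/\overline{s}_e(x)$, $\phi_e(x)=x\underline{t}_e(x)/\overline{s}_e(x)$, $\eta_e(x)=x\overline{t}_e(x)/\overline{s}_e(x)$ turns this into $\sum_e\bigl(\mu_e(v_e)+\phi_e(v_e)-\eta_e(v_e)\lambda_e(v_e)\bigr)\overline{s}_e(v_e)$. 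Since each $\overline{s}_e(v_e)\ge0$ and, by \eqref{eqn:poly_function_restriction}, each bracket is at least $\min_{x>0}\{\mu_e(x)+\phi_e(x)-\eta_e(x)\lambda_e(x)\}=\gamma(c_e,\mathbf{D})^{-1}\ge\gamma(\mathcal{C},\mathbf{D})^{-1}$, the claimed inequality follows.

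I do not anticipate a deep obstacle: Lemmas~\ref{lem:moment_Ve}, \ref{lem:weaker-UE-SD} and \ref{lem:Se_lb} already carry the analytic load, and what remains is essentially bookkeeping. The care needed is to fix one convention for which of $\mathbf{v},\mathbf{v}^{\ast}$ is the UE-SD and which the SO-SD and then relabel the hypotheses of Lemmas~\ref{lem:weaker-UE-SD} and \ref{lem:Se_lb} consistently, since they are stated with the starred vector playing opposite roles; to make sure the \emph{upper} envelopes $\overline{s}_e,\overline{t}_e$ sit on the UE-SD side and the \emph{lower} envelope $\underline{s}_e$ on the SO-SD side (swapping them would break the chain); and to check that the $\lambda_e$ emerging from Lemma~\ref{lem:Se_lb} is the same map appearing inside $\mu_e$ and $\eta_e$. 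A minor loose end is the degenerate case $\overline{s}_e(v_e)=0$, which (given $b_{ej}\ge0$) occurs only when $v_e=0$ and $b_{e0}=0$; that summand then contributes $0$ to both sides and can simply be discarded, consistently with the convention $0/0=1$.
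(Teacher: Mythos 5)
Your argument is correct and follows essentially the same route as the paper's proof: bound $T(\mathbf{p})$ above by $\sum_e\overline{s}_e(v_e)$ and $T(\mathbf{p}^\ast)$ below via \eqref{eqn:expected_link_cost_bound0}, then chain Lemma~\ref{lem:Se_lb} and Lemma~\ref{lem:weaker-UE-SD} (with the SO-SD as the arbitrary strategy) to eliminate $\mathbf{v}^\ast$, factor out $\overline{s}_e(v_e)$, and invoke the definition of $\gamma(\mathcal{C},\mathbf{D})$. Your explicit attention to the relabeling of starred vectors and to the degenerate $\overline{s}_e(v_e)=0$ case is sound bookkeeping that the paper handles implicitly via the $0/0=1$ convention.
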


\begin{proof}
Let $\mathbf{p},\mathbf{p}^{\ast} \in \Omega$ be respectively the UE-SD and SO-SD strategies, with $\mathbf{v}$
and $\mathbf{v}^{\ast}$ as the corresponding mean link flows. Then
\begin{align*}
T(\mathbf{p}^{\ast})&=\sum_{e\in E}\mathbb{E}\,[c_e(V_e^*)V_e^*]
  \geq\sum_{e\in E} \underline{s}_e(v_e^{\ast})       & (\Leftarrow \eqref{eqn:expected_link_cost_bound0}) \\
&\geq\sum_{e\in E}\left(\underline{s}_e(\lambda_e(v_e)v_e)+
  (v_e^{\ast}-\lambda_e(v_e)v_e)\overline{t}_e(v_e)\right)  & (\Leftarrow \mbox{Lemma}\ \ref{lem:Se_lb}) \\
&=\sum_{e\in E}\left(\underline{s}_e(\lambda_e(v_e)v_e)+ v_e^{\ast}
  \overline{t}_e(v_e)- \lambda_e(v_e)v_e\overline{t}_e(v_e)\right) \\
&\geq\sum_{e\in E}\left(\underline{s}_e(\lambda_e(v_e)v_e)+ v_e
  \underline{t}_e(v_e)- \lambda_e(v_e)v_e\overline{t}_e(v_e)\right). & (\Leftarrow
   \mbox{Lemma}\ \ref{lem:weaker-UE-SD})
\end{align*}
Rewriting the last line of above leads to
\begin{align*}
T(\mathbf{p}^{\ast})&\geq\sum_{e\in E}\left(\mu_e(v_e)+\phi_e(v_e)-\eta_e(v_e)\lambda_e(v_e)\right)
  \overline{s}_e(v_e) \\
&\geq\frac{1}{\gamma(\mathcal{C},\mathbf{D})} \underset{e\in E}\sum\overline{s}_e(v_e) \geq \frac{1}{\gamma(\mathcal{C},\mathbf{D})}\underset{e\in E}\sum \mathbb{E}\,[c_e(V_e)V_e],
\end{align*}
which implies the proposition.
\end{proof}

The above proposition upper bounds the PoA by $\gamma(\mathcal{C},\mathbf{D})$. Next we show how to
compute the $\gamma$ value. Let $\mathcal{C}_m$ be the set of polynomial functions with nonnegative
coefficients and degree at most $m\in \mathbb{Z}_+$. Let $\widetilde{\mathcal{C}}_m$ be
the subset of $\mathcal{C}_m$ consisting of only one term, namely $\widetilde{\mathcal{C}}_m
=\{bx^j:b\geq 0, j=0,\ldots, m\}$. The following lemma shows the PoA with cost functions
in $\mathcal{C}_m$ is bounded by $\gamma(\widetilde{\mathcal{C}}_m,\mathbf{D})$.

\begin{lem}\label{c-cm-poly}
Let $\mathcal{I}_m =\{(G, \mathbf{D}, \mathbf{c}): \mathbf{c}\in \mathcal{C}_m\}$. Then
\[
\underset{(G, \mathbf{D}, \mathbf{c})\in \mathcal{I}_m}\sup \mbox{PoA}(G, \mathbf{D}, \mathbf{c})
\leq \gamma(\widetilde{\mathcal{C}}_m,\mathbf{D})
\]
\end{lem}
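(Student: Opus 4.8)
The plan is to deduce the bound from Proposition~\ref{pro:upper_bound_with_positive_distributions}, which already gives $\mbox{PoA}(G,\mathbf{D},\mathbf{c})\le\gamma(\mathcal{C},\mathbf{D})$ whenever every cost function in $\mathcal{C}$ satisfies the restriction~\eqref{eqn:poly_function_restriction}. So the whole content of the lemma reduces to a purely analytic statement about the anarchy value: I would show that (i) every $c\in\mathcal{C}_m$ satisfies~\eqref{eqn:poly_function_restriction}, and (ii) $\gamma(c,\mathbf{D})\le\gamma(\widetilde{\mathcal{C}}_m,\mathbf{D})$ for each such $c$. Granting (i)--(ii), Proposition~\ref{pro:upper_bound_with_positive_distributions} with $\mathcal{C}=\mathcal{C}_m$ yields $\mbox{PoA}(G,\mathbf{D},\mathbf{c})\le\gamma(\mathcal{C}_m,\mathbf{D})\le\gamma(\widetilde{\mathcal{C}}_m,\mathbf{D})$ for every instance of $\mathcal{I}_m$, and taking the supremum over $\mathcal{I}_m$ finishes the proof.

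For the analytic step, fix $c(x)=\sum_{j=0}^m b_jx^j\in\mathcal{C}_m$ and $x>0$, and write $\Lambda:=\lambda(x)$ for the single scalar attached to $c$ in Definition~\ref{def:aux_cost-functions}. Plugging the explicit forms of $\overline t,\underline t,\overline s,\underline s$ for $c$ into $\mu(x)+\phi(x)-\eta(x)\lambda(x)$ and regrouping, a short computation rewrites it as a convex combination
\[
\mu(x)+\phi(x)-\eta(x)\lambda(x)=\sum_{j=0}^m w_j\,g_j(\Lambda),\qquad
w_j=\frac{b_j\,\overline{\theta}^{(j+1)}x^{j+1}}{\sum_{k=0}^m b_k\,\overline{\theta}^{(k+1)}x^{k+1}},
\]
where $g_j(t)=\bigl(t^{\,j+1}+1-\overline{\theta}^{(j)}t\bigr)/\overline{\theta}^{(j+1)}$ and $w_j\ge0$, $\sum_j w_j=1$; hence $\mu(x)+\phi(x)-\eta(x)\lambda(x)\ge\min_{0\le j\le m}g_j(\Lambda)$. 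Now I would analyse each $g_j$ on $(0,\infty)$: it is convex, since $g_j''(t)=j(j+1)t^{\,j-1}/\overline{\theta}^{(j+1)}\ge0$, and for $j\ge1$ it attains its minimum at $t=\lambda_j:=(\overline{\theta}^{(j)}/(j+1))^{1/j}$, while $g_0\equiv1$ because $\overline{\theta}^{(0)}=\overline{\theta}^{(1)}=1$. Running the same substitution for a single monomial $bx^j$ ($b>0$) shows that there $\lambda(x)\equiv\lambda_j$ and $\mu(x)+\phi(x)-\eta(x)\lambda(x)\equiv g_j(\lambda_j)$, independently of $b$ and $x$; thus $\gamma(bx^j,\mathbf{D})=g_j(\lambda_j)^{-1}$ and therefore $\gamma(\widetilde{\mathcal{C}}_m,\mathbf{D})=\bigl(\min_{0\le j\le m}g_j(\lambda_j)\bigr)^{-1}$. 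Combining, for every $x>0$ we get $\mu(x)+\phi(x)-\eta(x)\lambda(x)\ge\min_j g_j(\Lambda)\ge\min_j g_j(\lambda_j)=\gamma(\widetilde{\mathcal{C}}_m,\mathbf{D})^{-1}>0$, which certifies~\eqref{eqn:poly_function_restriction} for $c$ and gives $\gamma(c,\mathbf{D})=\sup_{x>0}[\mu(x)+\phi(x)-\eta(x)\lambda(x)]^{-1}\le\gamma(\widetilde{\mathcal{C}}_m,\mathbf{D})$, establishing (i)--(ii).

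The step I expect to cost the most care is the algebraic reduction to the weighted-average form: one must track that the common scalar $\Lambda=\lambda(x)$ — which is tied to all coefficients of $c$ at once through $\underline s'(\lambda(x)x)=\overline t(x)$ — appears identically in every summand, and that numerator and denominator of $\mu+\phi-\eta\lambda$ regroup with exactly the weights $w_j$. The conceptual point that makes the argument go through despite this coupling is that we never need to evaluate $g_j$ at its own optimal scaling $\lambda_j$; the pointwise bound $g_j(\Lambda)\ge g_j(\lambda_j)$ suffices, so the mismatch between the single scalar for $c$ and the per-degree scalars for the monomials is harmless. Minor points to dispatch along the way are that $\lambda(x)>0$ for every non-constant $c\in\mathcal{C}_m$ and $x>0$ so that the definitions apply, and that the degenerate cases $b_j=0$ and $j\in\{0,1\}$ are absorbed by the convention $0/0=1$ and by $g_0\equiv1$.
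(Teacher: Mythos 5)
Your argument is correct, but it takes a genuinely different route from the paper. The paper proves the lemma by a network transformation in the spirit of Roughgarden: each edge $e$ with cost $c_e(x)=\sum_{j=0}^m b_{ej}x^j$ is replaced by a directed path of $m+1$ links carrying the monomial costs $b_{ej}x^j$, so the instance becomes one with costs in $\widetilde{\mathcal{C}}_m$ and Proposition~\ref{pro:upper_bound_with_positive_distributions} is invoked for that class; the equilibria, optima and expected total costs are unchanged because the sub-links carry exactly the same random flow as the original edge and path costs add. You instead prove the purely analytic inequality $\gamma(c,\mathbf{D})\le\gamma(\widetilde{\mathcal{C}}_m,\mathbf{D})$ for every $c\in\mathcal{C}_m$, by writing $\mu(x)+\phi(x)-\eta(x)\lambda(x)$ as the convex combination $\sum_j w_j g_j(\Lambda)$ and using $g_j(\Lambda)\ge g_j(\lambda_j)$, then apply Proposition~\ref{pro:upper_bound_with_positive_distributions} directly with $\mathcal{C}=\mathcal{C}_m$; I checked the decomposition, the identification $\gamma(bx^j,\mathbf{D})=g_j(\lambda_j)^{-1}$ (which matches the expression in Theorem~\ref{thm:compute_PoA_poly}), and the handling of $j=0$ via $\overline{\theta}^{(0)}=\overline{\theta}^{(1)}=1$, and they are all sound. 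What each approach buys: the paper's proof is one sentence and sidesteps all algebra, but it leaves implicit both the verification that the edge-splitting preserves UE-SD/SO-SD under stochastic demands and the fact that the transformed instance satisfies \eqref{eqn:poly_function_restriction}; your proof is longer but makes the comparison of anarchy values explicit and, as a by-product, shows that \eqref{eqn:poly_function_restriction} for all of $\mathcal{C}_m$ already follows from the monomial condition assumed in Theorem~\ref{thm:compute_PoA_poly} — a point the paper never spells out. Note that both arguments (yours and the paper's) implicitly assume that the monomial class satisfies \eqref{eqn:poly_function_restriction}, i.e.\ that $\gamma(\widetilde{\mathcal{C}}_m,\mathbf{D})$ is well defined; you state this assumption more openly, which is a small improvement in rigor rather than a gap.
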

\begin{proof}
An arbitrary instance $(G, \mathbf{D}, \mathbf{c})$ with link cost functions in $\mathcal{C}_m$ can be transformed into an equivalent instance with link cost functions in  $\widetilde{\mathcal{C}}_m$ by replacing any edge $e\in E$ with link cost $c_e(x)=\sum_{j=0}^{m}b_{ej}x^j$ with a directed path consisting of $m+1$ links with the $j$-th link processing the cost $\widetilde{c}_{e,j}(x)=b_{ej}x^{j-1}$.
\end{proof}

\textbf{Remark.} A similar lemma can be found in \Citep{Roughgarden2005-book} for calculating the anarchy
value of polynomial cost functions in the deterministic models.

\begin{thm}\label{thm:compute_PoA_poly}
Let $\mathbf{D}$ be a vector of positive-valued random variables and satisfy
\begin{equation}\label{eqn:restriction_positive dist_m}
\overline{\theta}^{(m)}< (m+1)(1/m)^{m/(m+1)}, \ m\in \mathbb{Z}_+.
\end{equation}
Then
\begin{equation}\label{eqn:Gamma_m}
\gamma(\widetilde {\mathcal{C}}_m,\mathbf{D})=\max_{0\leq j\leq m} \left(\frac{1}{\overline{\theta}^{(j+1)}}-\frac{j}{j+1}
\frac{\overline{\theta}^{(j)}}{\overline{\theta}^{(j+1)}}
\left(\frac{\overline{\theta}^{(j)}}{j+1}\right)^{\hspace{-5pt}1/j}\,\right)^{\hspace{-5pt}-1}.
\end{equation}
\end{thm}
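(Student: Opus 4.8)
The plan is to compute $\gamma(\widetilde{\mathcal{C}}_m,\mathbf{D})$ directly from Definition~\ref{def:aux_cost-functions} together with the definition of $\gamma$, exploiting two simplifications that occur for a single-monomial cost function: everything is scale-invariant in the leading coefficient, and the auxiliary quantities $\lambda,\mu,\phi,\eta$ turn out to be \emph{constant} in the flow variable $x$. First I would note that $\gamma(bx^{j},\mathbf{D})$ is independent of $b>0$: scaling $c$ by $b$ scales $\overline{t},\underline{t},\overline{s},\underline{s}$ all by $b$, hence leaves $\lambda(\cdot)$ (determined by $\underline{s}'(\lambda(x)x)=\overline{t}(x)$) unchanged and leaves the ratios $\mu,\phi,\eta$ unchanged; the zero function contributes $\gamma=1$ under the $0/0=1$ convention, which cannot be the maximum. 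Hence $\gamma(\widetilde{\mathcal{C}}_m,\mathbf{D})=\max_{0\le j\le m}\gamma(x^{j},\mathbf{D})$, and it remains to evaluate each $\gamma(x^{j},\mathbf{D})$.

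For $c(x)=x^{j}$ the four auxiliary functions of Definition~\ref{def:aux_cost-functions} are $\overline{t}(x)=\overline{\theta}^{(j)}x^{j}$, $\underline{t}(x)=x^{j}$, $\overline{s}(x)=\overline{\theta}^{(j+1)}x^{j+1}$, $\underline{s}(x)=x^{j+1}$, so $\underline{s}'(x)=(j+1)x^{j}$. Solving $\underline{s}'(\lambda x)=\overline{t}(x)$ gives $(j+1)\lambda^{j}=\overline{\theta}^{(j)}$, i.e. a \emph{constant} $\lambda=\bigl(\overline{\theta}^{(j)}/(j+1)\bigr)^{1/j}$ (for $j=0$ the factor $j/(j+1)$ in the target formula vanishes and the term is just $\overline{\theta}^{(1)}$, which I would dispatch as a trivial special case). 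Substituting, $\mu=\lambda^{j+1}/\overline{\theta}^{(j+1)}$, $\phi=1/\overline{\theta}^{(j+1)}$, $\eta=\overline{\theta}^{(j)}/\overline{\theta}^{(j+1)}$, each independent of $x$; using $\lambda^{j+1}=\lambda\cdot\overline{\theta}^{(j)}/(j+1)$ and collecting terms yields
\[
\mu+\phi-\eta\lambda=\frac{1}{\overline{\theta}^{(j+1)}}-\frac{j}{j+1}\,\frac{\overline{\theta}^{(j)}}{\overline{\theta}^{(j+1)}}\Bigl(\frac{\overline{\theta}^{(j)}}{j+1}\Bigr)^{1/j}.
\]
Because this is constant in $x$, the supremum over $x>0$ is trivial and $\gamma(x^{j},\mathbf{D})=\bigl[\mu+\phi-\eta\lambda\bigr]^{-1}$ is exactly the $j$-th entry of the maximum in \eqref{eqn:Gamma_m}; taking the maximum over $0\le j\le m$ gives the claimed identity.

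The genuinely substantive step is to verify that hypothesis \eqref{eqn:restriction_positive dist_m} guarantees $\mu+\phi-\eta\lambda>0$ for \emph{every} $j\in\{0,\dots,m\}$ — i.e. that each $c\in\widetilde{\mathcal{C}}_m$ satisfies the restriction \eqref{eqn:poly_function_restriction}, so that $\gamma(\widetilde{\mathcal{C}}_m,\mathbf{D})$ is a genuine (finite) supremum over all of $\widetilde{\mathcal{C}}_m$. A short rearrangement shows $\mu+\phi-\eta\lambda>0$ is equivalent to $\overline{\theta}^{(j)}<(j+1)(1/j)^{j/(j+1)}$. To pass from ``$j=m$'' to ``all $j\le m$'' I would invoke two monotonicity facts: (i) $\overline{\theta}^{(j)}$ is non-decreasing in $j$, which follows from log-convexity of moments (Cauchy--Schwarz gives $\mathbb{E}[D_i^{j+1}]\mathbb{E}[D_i^{j-1}]\ge\mathbb{E}[D_i^{j}]^{2}$, hence $\mathbb{E}[D_i^{j+1}]/\mathbb{E}[D_i^{j}]\ge\mathbb{E}[D_i]=d_i$ and thus $\theta_i^{(j+1)}\ge\theta_i^{(j)}$, all finite since $D_i$ has finite $m$-th moment); and (ii) $\psi(j):=(j+1)(1/j)^{j/(j+1)}$ is non-increasing for $j\ge1$, since $\tfrac{d}{dj}\ln\psi(j)=-\ln j/(j+1)^{2}\le0$. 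Combining, for $1\le j\le m$ we get $\overline{\theta}^{(j)}\le\overline{\theta}^{(m)}<\psi(m)\le\psi(j)$, while $j=0$ is vacuous, which establishes the positivity and completes the proof. I expect this positivity/monotonicity verification to be the main obstacle; the closed-form computation itself is essentially bookkeeping once one observes that $\lambda,\mu,\phi,\eta$ do not depend on $x$.
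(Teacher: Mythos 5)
Your proposal is correct and follows essentially the same route as the paper's own proof: for a monomial $bx^{j}$ you compute $\lambda,\mu,\phi,\eta$ (constant in $x$ and independent of $b$), obtain the closed-form $j$-th term, and reduce condition \eqref{eqn:poly_function_restriction} to $\overline{\theta}^{(j)}<(j+1)(1/j)^{j/(j+1)}$, which the theorem's hypothesis on $\overline{\theta}^{(m)}$ implies via the monotonicity of $\overline{\theta}^{(j)}$ in $j$ and the monotone decrease of $(j+1)(1/j)^{j/(j+1)}$. The only cosmetic difference is that you establish the monotonicity of $\overline{\theta}^{(j)}$ by Cauchy--Schwarz (log-convexity of moments) rather than by the paper's inequality \eqref{eqn:theta_factorization}, and you write out the derivative computation the paper dismisses as routine.
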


\begin{proof}
For any $c_e(\cdot)\in \widetilde {\mathcal{C}}_m$ with $c_e(x) =b_{ej}x^j$, we have
\[
\begin{array}{ll}
\displaystyle \overline{t}_e(v_e)=b_{ej} \overline{\theta}^{(j)}v_e^j, &\underline{t}_e(v_e)=b_{ej}v_e^j,\\
\displaystyle \overline{s}_e(v_e)=b_{ej} \overline{\theta}^{(j+1)}v_e^j,& \underline{s}_e(v_e)=b_{ej}v_e^{j+1},\\ \displaystyle \underline{s}_e'(v_e)=b_{ej}(j+1)v_e^j,
\end{array}
\]
from which we obtain
\begin{align*}
\lambda_e(v_e) &=({\overline{\theta}^{(j)}}/({j+1}))^{1/j},\\
\mu_e(v_e)&=\frac{1}{\overline{\theta}^{(j+1)}}
   \left(\frac{\overline{\theta}^{(j)}}{j+1}\right)^{\hspace{-5pt}1+1/j},\\
\phi_e(v_e)&=1/{\overline{\theta}^{(j+1)}},\\
\eta_e(v_e)&={\overline{\theta}^{(j)}}/{\overline{\theta}^{(j+1)}},
\end{align*}
which are all independent of $v_e$. If condition \eqref{eqn:poly_function_restriction} is satisfied, then
we have
\begin{align*}
\gamma(c_e,\mathbf{D})=&(\mu_e(v_e)+\phi_e(v_e)-\eta_e(v_e)\lambda_e(v_e))^{-1}\\
=&\left(\frac{1}{\overline{\theta}^{(j+1)}}-\frac{j}{j+1}
\frac{\overline{\theta}^{(j)}}{\overline{\theta}^{(j+1)}}
\left(\frac{\overline{\theta}^{(j)}}{j+1}\right)^{\hspace{-5pt}1/j}\,\right)^{\hspace{-5pt}-1},
\end{align*}
which implies \eqref{eqn:Gamma_m}. On the other hand, condition \eqref{eqn:poly_function_restriction}
requires that, for any $j=1,\ldots, m$,
\[
\frac{1}{\overline{\theta}^{(j+1)}}-\frac{j}{j+1}\frac{\overline{\theta}^{(j)}}
{\overline{\theta}^{(j+1)}}\left(\frac{\overline{\theta}^{(j)}}{j+1}\right)^{\hspace{-5pt}1/j}>0,
\]
which is equivalent to
\begin{equation}\label{constraint-general j}
\overline{\theta}^{(j)}< (j+1)(1/j)^{j/(j+1)}, \qquad \forall\ j=1,\ldots, m.
\end{equation}
It is routine to check that $(j+1)(1/j)^{j/(j+1)}$ is decreasing in $j$,
while $\overline{\theta}^{(j)}$ is increasing in $j$ according to \eqref{eqn:theta_factorization}. Therefore,
\eqref{constraint-general j} is implied by \eqref{eqn:restriction_positive dist_m}.
\end{proof}

The applicability of Theorem~\ref{thm:compute_PoA_poly} depends on satisfaction of
\eqref{eqn:restriction_positive dist_m}. Let us consider some practical values of $m$ in
\eqref{eqn:restriction_positive dist_m}: $m=2,3,4$, as the highest power of a link cost function is
seldom greater than 4 and usually the first four moments are studied in practice. Table~\ref{tab:D_region_positive}
lists the applicable ranges of these values.

\begin{table}[htbp]
\centering
 \begin{tabular}{lcl}
  \toprule
  Degree & Variability\\
  \midrule
$m=2$ & $\overline{\theta}^{(2)}< 1.889$\\
$m=3$ & $\overline{\theta}^{(3)}< 1.754$\\
$m=4$ & $\overline{\theta}^{(4)}< 1.649$\\
  \bottomrule
 \end{tabular}
 \caption{Applicable ranges of variability}\label{tab:D_region_positive}
\end{table}

Consider for example the uniform distribution $U[a,b]$ in Table~\ref{tab:D_region_positive}. The results of the
upper bounds on $b/a$, denoted by $[b/a]_{\max}$, are displayed in Table~\ref{tab:applicable_uni_dist} for normalized $a=1$.
\begin{table}[htbp]
\centering
 \begin{tabular}{lrl}
  \toprule
  Degree & $[b/a]_{\max}$\\
  \midrule
$m=2$ & $+\infty$ \\
$m=3$ & $14.241$ \\
$m=4$ & $3.556$ \\
  \bottomrule
 \end{tabular}
 \caption{Applicable uniform distribution $U[1,b]$}\label{tab:applicable_uni_dist}
\end{table}
As can seen, for link cost functions with the degree at most $m=2$, our upper bound on the PoA \eqref{eqn:Gamma_m} is applicable for
any positive-valued uniform distributions. For the cases $m=3,4$, our upper bounds on the PoA
are applicable to the uniform distributed demands with $b/a$ no more than $14.241$ and $3.556$, respectively.

\smallskip\textbf{Remark.} The upper bound of the PoA in Theorem~\ref{thm:compute_PoA_poly} is a generalization
of that provided by \Citep{Roughgarden2002} for deterministic models and the bound is tight when the demands return
to being deterministic. In fact, substituting $\overline{\theta}^{(j)}=1$ for every integer $0\leq j\leq m$ in
\eqref{eqn:Gamma_m}, the upper bound becomes
\[
\gamma(\widetilde {\mathcal{C}}_m, \mathbf{D})= \left(1-m(m+1)^{-(m+1)/m}\right)^{-1},
\]
which matches the tight upper bound of the PoA in deterministic models.

\subsection{Polynomial cost functions and normal distributions}

In this section, we still work on polynomial link cost functions \eqref{eqn:poly_cost_function}, but with
demands following the normal distributions, which are widely used in the literature to simulate
traffic demands, especially for a large mean or relatively small variance, although negative tails are contained \Citep{Watling2005, Asakura}.

Note that any path flow follows a normal distribution since it is a fraction of a normal distribution. Also any
link flow follows a normal distribution as it is the sum of some independent random variables of normal
distributions, i.e., $V_e \sim N(v_e, \sigma_e^2)$ for $e\in E$, with $v_e$ and $\sigma_e^2$ satisfying
\eqref{eqn:affine_implied_relation}. The $m$-th moment of the link flow on $e\in E$ can be written as a
function of the mean and variance of the link flow
\begin{equation}\label{eqn:V_e^m_normal}
\mathbb{E}[V_e^m]=\sum_{r=0,\,r=\textrm{even}}^m \dbinom{m}{r}(\sigma_e)^r (v_e)^{m-r}(r-1)!!, \quad
\forall\ e\in E,
\end{equation}
where $m\in \mathbb{N}$ is the power degree, $(r-1)!!$ is the double factorial of $r-1$, i.e., $(r-1)!!=(r-1)(r-3)\cdots 1$ (if $r$ is even) with the understanding that $(-1)!!=1$, and $\dbinom{m}{r}={m!}/({(m-r)!r!})$ is a binomial
coefficient. Our next lemma bounds the $m$-th moment of the link flow with functions of the mean link flow.
Let
\[
\ell_{m}=\sum_{r=0,\,r=\textrm{even}}^m \dbinom{m}{r}
\left(\frac{\underline{\theta}^2}{n}\right)^{r/2}(r-1)!!.
\]

\begin{lem}\label{lem:Ve_moment_normal}
Given a transportation game $(G,\mathbf{D},\mathbf{c})$ with $\mathbf{D}$ satisfying normal distributions, we
have
\[
\ell_{m}v_e^m \leq \mathbb{E}[V_e^m]\leq \overline{\theta}^{(m)} v_e^m, \ \forall\ e\in E.
\]
\end{lem}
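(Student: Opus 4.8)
The plan is to bound each moment $\mathbb{E}[V_e^m]$ using the explicit normal-moment formula \eqref{eqn:V_e^m_normal} together with the variance bounds that were already derived for the affine case. Recall from \eqref{eqn:sigma_lb} and \eqref{eqn:sigma_ub} in the proof of Lemma~\ref{lem:TC_bounds} that, for every $e\in E$,
\[
\frac{\underline{\theta}^2}{n}\,v_e^2 \;\le\; \sigma_e^2 \;\le\; \overline{\theta}^2\,v_e^2 .
\]
The upper bound $\mathbb{E}[V_e^m]\le \overline{\theta}^{(m)}v_e^m$ is exactly the statement of Lemma~\ref{lem:moment_Ve}, which applies here once we observe that the normal distribution is \emph{not} positive-valued; so strictly speaking I cannot quote that lemma verbatim. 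Instead, for the upper bound I would substitute $\sigma_e^2\le\overline{\theta}^2 v_e^2$ into \eqref{eqn:V_e^m_normal}, obtaining
\[
\mathbb{E}[V_e^m]\;\le\;\sum_{r=0,\,r=\mathrm{even}}^m\binom{m}{r}\overline{\theta}^{\,r}v_e^r\,v_e^{m-r}(r-1)!!
\;=\;v_e^m\sum_{r=0,\,r=\mathrm{even}}^m\binom{m}{r}\overline{\theta}^{\,r}(r-1)!!,
\]
and then identify the last sum with $\overline{\theta}^{(m)}$ by plugging $v_e=1$, $\sigma_e=\overline{\theta}$ back into \eqref{eqn:V_e^m_normal} — i.e.\ $\overline{\theta}^{(m)}=\mathbb{E}[D_i^m]/d_i^m$ evaluated for the normal demand with coefficient of variation $\overline{\theta}$, which is precisely that sum (noting $\overline{\theta}^{(m)}$ is largest for the O-D pair attaining $\overline{\theta}=\max_i\theta_i$, and for a normal variable the normalized moment depends only on the coefficient of variation). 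The lower bound is symmetric: substitute $\sigma_e^2\ge(\underline{\theta}^2/n)v_e^2$ into \eqref{eqn:V_e^m_normal}, factor out $v_e^m$, and the residual sum is exactly $\ell_m$ as defined just before the lemma statement.

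The one genuine subtlety — and the step I expect to cost the most care — is the monotonicity of the map $\sigma\mapsto\mathbb{E}[V_e^m]$ for fixed mean $v_e$, which is what licenses the two substitutions above. From \eqref{eqn:V_e^m_normal}, $\mathbb{E}[V_e^m]=\sum_{r\text{ even}}\binom{m}{r}(r-1)!!\,\sigma_e^r v_e^{m-r}$ is a polynomial in $\sigma_e$ with nonnegative coefficients (all binomials, double factorials, and $v_e^{m-r}$ are nonnegative since $v_e\ge0$), hence nondecreasing in $\sigma_e\ge0$. Therefore replacing $\sigma_e$ by its upper (resp.\ lower) bound only increases (resp.\ decreases) the moment. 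I would state this monotonicity explicitly as the pivot of the argument.

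Putting the pieces together: fix $e\in E$; apply the variance sandwich from \eqref{eqn:sigma_lb}–\eqref{eqn:sigma_ub}; invoke the coefficient-wise monotonicity of \eqref{eqn:V_e^m_normal} in $\sigma_e$ to get
\[
\ell_m v_e^m \;\le\; \mathbb{E}[V_e^m]\;\le\;\Big(\sum_{r=0,\,r=\mathrm{even}}^m\binom{m}{r}\overline{\theta}^{\,r}(r-1)!!\Big)v_e^m,
\]
and finally identify the bracketed constant with $\overline{\theta}^{(m)}$ via \eqref{eqn:theta_m} and \eqref{eqn:V_e^m_normal} applied to the demand variable itself. Summation over $e$ is not even needed since the claim is stated edge-by-edge. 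No further machinery is required; the proof is short once the monotonicity observation is in place.
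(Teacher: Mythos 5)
Your proof is correct and follows essentially the same route as the paper: substitute the variance sandwich \eqref{eqn:sigma_lb}--\eqref{eqn:sigma_ub} into the normal-moment formula \eqref{eqn:V_e^m_normal} and identify the resulting constants with $\ell_m$ and $\overline{\theta}^{(m)}$ (the latter via the fact that for normal demands $\theta_i^{(m)}=\sum_{r\ \mathrm{even}}\binom{m}{r}\theta_i^r(r-1)!!$, maximized at $\overline{\theta}$). Your explicit remarks on monotonicity in $\sigma_e$ and on why Lemma~\ref{lem:moment_Ve} cannot be quoted verbatim are just careful statements of steps the paper leaves implicit.
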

\begin{proof}
Applying inequalities \eqref{eqn:sigma_lb} in \eqref{eqn:V_e^m_normal}, we obtain
\[
\mathbb{E}[V_e^m]\geq \sum_{r=0,\,r= \textrm{even}}^m \dbinom{m}{r}
\left(\frac{\underline{\theta}^2}{n}\right)^{r/2} (v_e)^m(r-1)!!, \  e\in E.
\]
On the other hand, observe that
\[
{\theta}_i^{(m)}=\sum_{r=0,\,r= \textrm{even}}^m \dbinom{m}{r}({\theta_i})^r(r-1)!!,\  \forall\ i\in I,
\]
which implies
\[
\overline{\theta}^{(m)}=\sum_{r=0,\,r=\textrm{even}}^m \dbinom{m}{r}(\overline{\theta})^r (r-1)!!,
\]
which together with \eqref{eqn:sigma_ub} implies the second inequality in the lemma.
\end{proof}

Compared with Lemma~\ref{lem:moment_Ve}, the upper bound on $\mathbb{E}[V_e^m]$ in
Lemma~\ref{lem:Ve_moment_normal} remains the same, while the lower bound is improved since $\ell_m>1$.
For polynomial link cost functions \eqref{eqn:poly_cost_function}, we will still use $\overline{t}_e$ and
$\overline{s}_e$ to upper bound the expected link flow functions and expected link total cost. However,
in order to establish the corresponding lower bounds, we use the following two new functions
$\underline{\tilde{t}}_e$ and $\underline{\tilde{s}}_e$ based on Lemma~\ref{lem:Ve_moment_normal} to replace
$\underline{t}_e$ and $\underline{s}_e$ for any $e\in E$:
\[
\underline{\tilde{t}}_e(v_e)=\sum_{j=0}^{m} b_{ej}\ell_{j}v_e^j, \quad \mbox{ and } \quad
\underline{\tilde{s}}_e(v_e)=\sum_{j=0}^{m} b_{ej} \ell_{j+1}v_e^{j+1}
\]
As usual, we use $\underline{\tilde{s}}_e'(\cdot)$ to denote the derivative of $\underline{\tilde{s}}_e(\cdot)$.

It is straightforward to check that, with positive-valued demand distributions replaced by the normal distributions and,
correspondingly,
with $\underline{t}_e$ and $\underline{s}_e$ replaced by $\underline{\tilde{t}}_e$ and $\underline{\tilde{s}}_e$
in Section~\ref{sec:poly}, Lemma~\ref{lem:weaker-UE-SD}, Lemma~\ref{lem:Se_lb} and
Proposition~\ref{pro:upper_bound_with_positive_distributions}, and hence Lemma~\ref{c-cm-poly} still hold. More
importantly, with the new functions of $\underline{\tilde{t}}_e$ and $\underline{\tilde{s}}_e$ for
dealing with normal distributions, the value of $\gamma(\widetilde{\mathcal{C}}_m)$ in
Lemma~\ref{c-cm-poly} will be different for normal demand distributions as shown in the following
new theorem, as compared with Theorem~\ref{thm:compute_PoA_poly}.

\begin{thm}\label{thm:upper_bound_value_for_normal}
Let $\mathbf{D}$ be a vector of the normal distributions and satisfy
\begin{equation}\label{eqn:restriction-normal-j}
\underset{1\leq j\leq m}\min \left\{ \ell_j-\frac{\overline{\theta}^{(j)} j}{(j+1)}\left(\frac{\overline{\theta}^{(j)}}{\ell_{j+1}(j+1)}\right)^{\hspace{-5pt}1/j} \right\} >0.
\end{equation}
Then
\begin{equation}\label{eqn:PoA_normal}
\gamma(\widetilde{\mathcal{C}}_m,\mathbf{D})=\max_{0\leq j\leq m}
\left(\frac{\ell_j}{\overline{\theta}^{(j+1)}}-\frac{\overline{\theta}^{(j)} j}
{\overline{\theta}^{(j+1)}(j+1)}\left(\frac{\overline{\theta}^{(j)}}
{\ell_{j+1}(j+1)}\right)^{\hspace{-5pt}1/j}\,\right)^{-1}.
\end{equation}
\end{thm}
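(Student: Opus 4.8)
The plan is to reuse the argument for Theorem~\ref{thm:compute_PoA_poly} essentially verbatim, with the lower‑bounding functions $\underline{t}_e,\underline{s}_e$ replaced everywhere by $\underline{\tilde{t}}_e,\underline{\tilde{s}}_e$, which are built from the sharper lower bound of Lemma~\ref{lem:Ve_moment_normal} instead of Lemma~\ref{lem:moment_Ve}. As noted immediately before the theorem, under these substitutions Lemmas~\ref{lem:weaker-UE-SD} and~\ref{lem:Se_lb}, Proposition~\ref{pro:upper_bound_with_positive_distributions}, and Lemma~\ref{c-cm-poly} all remain valid, so the PoA for cost functions in $\mathcal{C}_m$ is still bounded above by $\gamma(\widetilde{\mathcal{C}}_m,\mathbf{D})$; it therefore suffices to evaluate this quantity with the new functions.

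First I would fix a single‑term cost function $c_e(x)=b_{ej}x^{j}\in\widetilde{\mathcal{C}}_m$ and record, in parallel with the proof of Theorem~\ref{thm:compute_PoA_poly}, the auxiliary functions $\overline{t}_e(v_e)=b_{ej}\overline{\theta}^{(j)}v_e^{j}$, $\underline{\tilde{t}}_e(v_e)=b_{ej}\ell_{j}v_e^{j}$, $\overline{s}_e(v_e)=b_{ej}\overline{\theta}^{(j+1)}v_e^{j+1}$, $\underline{\tilde{s}}_e(v_e)=b_{ej}\ell_{j+1}v_e^{j+1}$, and $\underline{\tilde{s}}_e'(v_e)=b_{ej}\ell_{j+1}(j+1)v_e^{j}$. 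Solving $\underline{\tilde{s}}_e'(\lambda_e(v_e)v_e)=\overline{t}_e(v_e)$ gives
\[
\lambda_e(v_e)=\left(\frac{\overline{\theta}^{(j)}}{\ell_{j+1}(j+1)}\right)^{1/j},
\]
which is independent of $v_e$; substituting into Definition~\ref{def:aux_cost-functions} and cancelling $b_{ej}$ and the common power of $v_e$ produces the three constants $\mu_e=\ell_{j+1}\lambda_e^{j+1}/\overline{\theta}^{(j+1)}$, $\phi_e=\ell_j/\overline{\theta}^{(j+1)}$ and $\eta_e=\overline{\theta}^{(j)}/\overline{\theta}^{(j+1)}$.

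The only step that is not pure bookkeeping is the simplification of $\mu_e+\phi_e-\eta_e\lambda_e$: writing $\lambda_e^{j+1}=\lambda_e\cdot\lambda_e^{j}=\lambda_e\,\overline{\theta}^{(j)}/(\ell_{j+1}(j+1))$ lets the $\mu_e$ term combine with $-\eta_e\lambda_e$, leaving
\[
\mu_e+\phi_e-\eta_e\lambda_e=\frac{1}{\overline{\theta}^{(j+1)}}\left(\ell_j-\frac{\overline{\theta}^{(j)}j}{j+1}\left(\frac{\overline{\theta}^{(j)}}{\ell_{j+1}(j+1)}\right)^{1/j}\right).
\]
For $1\le j\le m$ this is positive exactly because of hypothesis~\eqref{eqn:restriction-normal-j}, and for $j=0$ a direct computation (using $\ell_0=\overline{\theta}^{(0)}=1$) shows the value equals $1$, with~\eqref{eqn:poly_function_restriction} then trivially holding. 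Hence~\eqref{eqn:poly_function_restriction} is met for every $c\in\widetilde{\mathcal{C}}_m$, and $\gamma(c_e,\mathbf{D})=(\mu_e+\phi_e-\eta_e\lambda_e)^{-1}$ is precisely the $j$‑th bracketed term of~\eqref{eqn:PoA_normal}. Since this value is independent of the coefficient $b_{ej}$, passing to the supremum over $\widetilde{\mathcal{C}}_m$ reduces to the maximum over $j\in\{0,1,\dots,m\}$, yielding~\eqref{eqn:PoA_normal}. I expect the main thing to get right is this last cancellation together with the check that $\lambda_e$ carries no dependence on $v_e$ (so that $\mu_e,\phi_e,\eta_e$ are honest constants and Definition~\ref{def:aux_cost-functions} applies), and the separate treatment of the degenerate term $j=0$.
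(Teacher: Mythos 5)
Your proposal is correct and follows essentially the same route as the paper: replace $\underline{t}_e,\underline{s}_e$ by $\underline{\tilde{t}}_e,\underline{\tilde{s}}_e$, invoke the (still valid) Lemmas~\ref{lem:weaker-UE-SD}, \ref{lem:Se_lb}, Proposition~\ref{pro:upper_bound_with_positive_distributions} and Lemma~\ref{c-cm-poly}, compute $\lambda_e,\mu_e,\phi_e,\eta_e$ for a single-term cost $b_{ej}x^j$, and simplify $\mu_e+\phi_e-\eta_e\lambda_e$ to the bracketed expression in \eqref{eqn:PoA_normal}, with positivity for $1\le j\le m$ given by \eqref{eqn:restriction-normal-j}. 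Your explicit treatment of the degenerate $j=0$ term (value $1$ via $\ell_0=\overline{\theta}^{(0)}=\overline{\theta}^{(1)}=1$) is a minor extra care the paper leaves implicit, but it does not change the argument.
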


\begin{proof}
With $\underline{t}_e$ and $\underline{s}_e$ replaced by $\underline{\tilde{t}}_e$ and
$\underline{\tilde{s}}_e$, we have
\begin{align*}
&\lambda_e=\left({\overline{\theta}^{(j)}}/({\ell_{j+1}(j+1))}\right)^{1/j},\\
&\mu_e=\left({\overline{\theta}^{(j)}}/{\ell_{j+1}(j+1)}\right)^{1/j}\cdot
{\overline{\theta}^{(j)}}/({\overline{\theta}^{(j+1)}(j+1)}) ,\\
&\phi_e= {\ell_j}/{\overline{\theta}^{(j+1)}},\\
& \eta_e={\overline{\theta}^{(j)}}/{\overline{\theta}^{(j+1)}}.
\end{align*}
Hence
\[
\mu_e+\phi_e-\eta_e\lambda_e=
\frac{\ell_j}{\overline{\theta}^{(j+1)}}-\frac{\overline{\theta}^{(j)} j}{\overline{\theta}^{(j+1)}(j+1)}\left(\frac{\overline{\theta}^{(j)}}{\ell_{j+1}(j+1)}\right)^{1/j}.
\]
The remaining proof is very much the same as that for Theorem~\ref{thm:compute_PoA_poly}.
\end{proof}

As in Section~\ref{sec:poly}, let us use numerical examples to illustrate the applicability of
Theorem~\ref{thm:upper_bound_value_for_normal} due to condition \eqref{eqn:restriction-normal-j}.
For simplicity, let $\theta=\underline{\theta}=\overline{\theta}$.
Figure~\ref{fig:max_theta_normal} illustrates the maximum applicable $\theta$ with a given $n$ for polynomial cost
functions with highest degree $m=2,3,4$. With an increasing $n$, the applicable region of $\theta$ narrows down
dramatically at the beginning and then remains almost constant when $n$ becomes greater than 5. If $n=1$, the
applicable $\theta$ can go to infinity for $m=2,3,4$. If $n=2$, the $\theta$ can go up to infinity for $m=2,3$ and
$0.77$ for $m=4$. If $n=5$, the applicable regions of $\theta$ are less than $1.476$, $0.670$ and $0.394$ for $m=2,3,4$,
respectively.

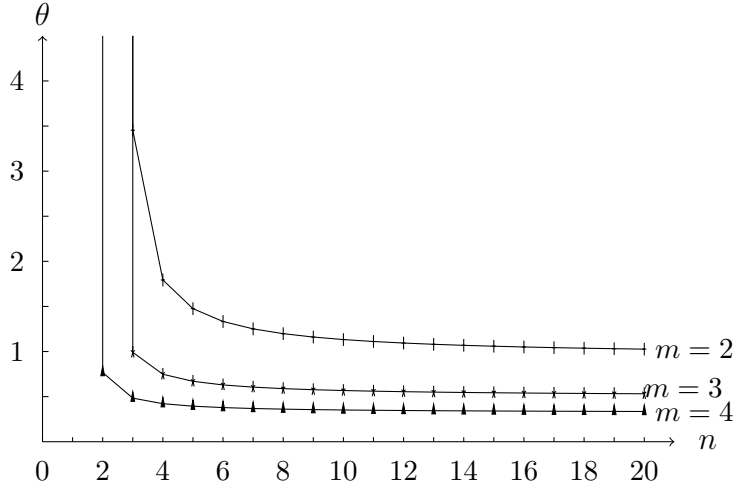
\begin{figure}
\centering
\begin{tikzpicture}
[xscale=0.4,yscale=1.2, domain=0:20]
\draw[->] (0,0) -- (21,0) node[right=5pt] {$n$};
\draw[->] (0,0) -- (0,4.5) node[above] {$\theta$};
\foreach \x/\textx  in {0,2,4,6,8,10,12,14,16,18,20}
{\draw (\x,0) -- (\x, 0.05) node [below=5pt]{\small $\x$};}
\foreach \x/\textx  in {1,3,5,7,9,11,13,15,17,19}
{\draw (\x,0) -- (\x, 0.05) ;}
\foreach \y/\texty  in {1,2,3,4}
{\draw (0,\y) -- (0.2,\y) node [left=5pt]{\small $\y$};}
\foreach \y/\texty  in {0.5,1.5,2.5,3.5}
{\draw (0,\y) -- (0.2,\y) ;}
\draw plot [mark=+] coordinates %
{ (3,3.45372)
        (4,	1.79357)
        (5,	1.47565)
        (6,	1.33306)
        (7,	1.25127)
        (8,	1.19804)
        (9,	1.16058)
        (10,1.13276)
        (11,1.11127)
        (12,1.09417)
        (13,1.08024)
        (14,1.06867)
        (15,1.0589)
        (16,1.05055)
        (17,1.04333)
        (18,1.03702)
        (19,1.03145)
        (20,1.02652)}node[right]{\small $m=2$};
\draw plot [mark=star] coordinates %
{(3,0.991206)
           (4,0.748085)
           (5,0.66982)
           (6,0.629806)
           (7,0.605317)
           (8,0.588738)
           (9,0.576752)
           (10,0.567676)
           (11,0.560563)
           (12,0.554836)
           (13,0.550125)
           (14,0.546181)
           (15,0.542831)
           (16,0.53995)
           (17,0.537446)
           (18,0.535249)
           (19,0.533305)
           (20,0.531575)
};
\node at (21.3,0.6){\small $m=3$};
\draw plot [mark=triangle*] coordinates %
        {(2,0.766966)
           (3,0.483332)
           (4,0.421674)
           (5,0.394156)
           (6,0.378492)
           (7,0.368363)
           (8,0.361268)
           (9,0.356021)
           (10,0.351981)
           (11,0.348774)
           (12,0.346167)
           (13,0.344005)
           (14,0.342183)
           (15,0.340628)
           (16,0.339283)
           (17,0.33811)
           (18,0.337077)
           (19,0.336161)
           (20,0.335342)
        }node[right]{\small $m=4$};
\draw (2,0.766966)--(2,4.5);
\draw (3,3.45372)--(3,4.5);
\draw (3,0.991206)--(3,4.5);
\end{tikzpicture}
\caption{Maximum applicable $\theta$ for normal distributions}
\label{fig:max_theta_normal}
\end{figure}
The next example shows the tightness of the upper bound.

\begin{eg}\label{eg:spigou-normal-poly}
Consider the two-link network in Figure~\ref{fig:two_link_network_of_normal_demand}. Assume the single
demand follows the Normal Distribution $D\sim N(d,\sigma^2)$.

\begin{figure}[H]
\centering
\begin{tikzpicture}
\draw (0,0) arc (0:360:4 and 1);
\draw [->] (-0.3,0.38)--(-0.2,0.3);
\draw [->] (-0.3,-0.38)--(-0.2,-0.3);
\node at (0,0) [circle,draw,fill=white] {t};
\node at(-8,0)[circle,draw,fill=white] {s};
\node at(-4, -1.4){$c_2(x)=x^j$};
\node at(-4, 1.4){$c_1(x)=\mathbb{E}[D^j]$};
\end{tikzpicture}
\caption{Two-link network}
\label{fig:two_link_network_of_normal_demand}
\end{figure}
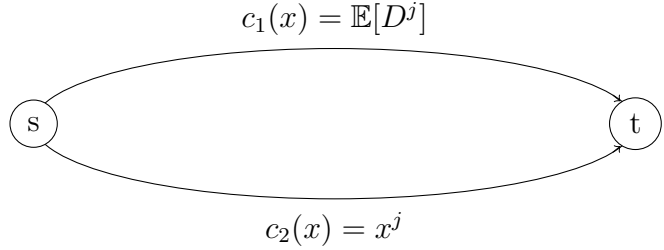

Define $g_j= \sum_{r=0,\,r=\textrm{even}}^j \dbinom{j}{r} \theta^r (r-1)!! $, where as before it is understood that $(-1)!! =1$.
Then $\mathbb{E}[D^j]=g_j d^j$. As the expected total cost on the lower link is never greater
than the upper link, strategy $\mathbf{p}^T=(0,1)^T$ is a UE-SD. We can calculate
$$
T(\mathbf{p})=\mathbb{E}[D^{j+1}]=g_{j+1}d^{j+1}.
$$
Let $\mathbf{p}^\ast=(p_1^\ast,p_2^\ast)^T$ be the the SO-SD strategy, which minimizes the
expected total cost
$$
T(\mathbf{p}^\ast)=p_1^\ast g_j d^{j+1}+(p_2^\ast)^{j+1}g_{j+1}d^{j+1}.
$$
Hence $p_1^\ast=1-[g_j/(g_{j+1}(j+1))]^{1/j}$ and $p_2^\ast=[g_j/(g_{j+1}(j+1))]^{1/j}$, which lead to
$$
T(\mathbf{p}^\ast)=\left( 1-\frac{j}{j+1}\left(\frac{g_j}
{g_{j+1}(j+1)}\right)^{\hspace{-4pt}1/j} \right) g_jd^{j+1}.
$$
Thus
$$
\textrm{PoA}=\left(\frac{g_j}{g_{j+1}}-\frac{g_j j}{g_{j+1}(j+1)}\left(\frac{g_{j}}{g_{j+1}(j+1)}\right)^{\hspace{-4pt}1/j}\right)^{-1}.
$$
\end{eg}
Therefore, the upper bound in \eqref{eqn:PoA_normal} is tight in the following two cases:
\begin{itemize}
\item When $n=1$ and $\overline{\theta}=\underline{\theta}$,
we have $\overline{\theta}^{(j)}=\ell_j=g_j$. The lower bound in Example \ref{eg:spigou-normal-poly} matches our upper bound.
\item When $m=1$, the upper bound matches the upper bound of the PoA with linear cost function established
in Section~\ref{sec:affine}. So it is tight when $\overline{\theta}=\underline{\theta}$.
\end{itemize}

\section{Discussion}

In our study with polynomial cost functions, we have established in Theorem~\ref{thm:compute_PoA_poly} and
\ref{thm:upper_bound_value_for_normal} two upper bounds on the PoA for two different demand settings,
namely, positive-valued distributions and the normal distributions. Based on the tightness analysis, the upper bound for normal
distributions is tight in a more general case as compared with that for general positive-valued distributions. In the study
for the normal distributions, we used two addition parameters, $n$ and $\underline{\theta}$, and the corresponding upper bound
on the PoA returns to the same as that for general positive-valued distributions when $n\rightarrow \infty$ or $\underline{\theta}\rightarrow 0$.

Next we use a numerical example of polynomial link cost functions with $m=2$ and demands of the normal distributions to compare the upper
bounds of the PoA with different values of $n$, as shown in Figure~\ref{fig:PoA_compare}. For simplicity, we consider the case that all the O-D pairs have a common coefficient of demand variation. From Theorem \ref{thm:upper_bound_value_for_normal}, the PoA is bounded by $\gamma(\widetilde{\mathcal{C}}_2, \mathbf{D}')$, where $\mathbf{D}'$ is the vector of demands with $\theta=\overline{\theta}=\underline{\theta}$.

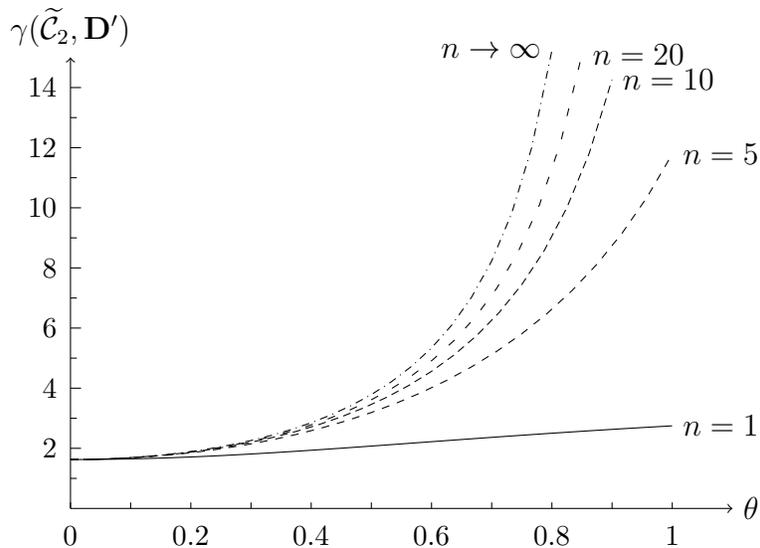
\begin{figure}[H]
\centering
\begin{tikzpicture}[xscale=8,yscale=0.4, domain=0:1]
\draw[->] (0,0) -- (1.1,0) node[right] {$\theta$};
\draw[->] (0,0) -- (0,15) node[above] {$\gamma(\widetilde{\mathcal{C}}_2, \mathbf{D}')$};
\foreach \x/\textx  in {0,0.2,0.4,0.6,0.8,1}
{\draw (\x,0) -- (\x, 0.2) node [below=5pt]{\small $\x$};}
\foreach \x/\textx  in {0.1,0.3,0.5,0.7,0.9}
{\draw (\x,0) -- (\x, 0.2) ;}
\foreach \y/\texty  in {2,4,6,8,10,12,14}
{\draw (0,\y) -- (0.5pt,\y) node [left=5pt]{\small $\y$};}
\foreach \y/\texty  in {1,3,5,7,9,11,13}
{\draw (0,\y) -- (0.3pt,\y) ;}
\draw  plot [domain=0:1](\x, {(9+27*\x^2)/((9+9*\x^2)-2*sqrt(3)*(1+\x^2)*sqrt((1+\x^2)/(1+3*\x^2)))})
node[right]{$n=1$};
\draw [dashed] plot [domain=0:1](\x, {(9+27*\x^2)/((9/5*(5+\x^2)-2*sqrt(15)*(1+\x^2)*
(sqrt((1+\x^2)/(5+3*\x^2)))) )})
node[right]{$n=5$};
\draw [densely dashed] plot [domain=0:0.9](\x, {(9+27*\x^2)/((9+0.9*\x^2)-2*sqrt(30)*(1+\x^2)*sqrt((1+\x^2)/(10+3*\x^2)))})
node[right]{$n=10$};
\draw [loosely dashed] plot [domain=0:0.85](\x, {(9+27*\x^2)/((9+0.45*\x^2)-4*sqrt(15)*(1+\x^2)*sqrt((1+\x^2)/(20+3*\x^2)))})
node[right]{$n=20$};
\draw [dashdotted] plot [domain=0:0.80](\x, {(9+27*\x^2)/(9-2*sqrt(3)*(1+\x^2)*sqrt(1+\x^2)))})
node[left]{$n\to \infty$};
\end{tikzpicture}
\caption{PoA comparison when $m=2$}
\label{fig:PoA_compare}
\end{figure}
As we can see, the upper bound is significantly better when the value of $n$ is small. As the upper bound of PoA with $n=1$ is tight,
the difference between the upper bound and lower bound is small when the demand variation is small (e.g., when $\theta<0.5$). The curve for $n\to \infty$ also illustrates the PoA with general positive-valued distributions.

The value $\overline{\theta}$ of the maximum variation of the demands is of vital importance in all the upper bounds. They increase
as it goes up and are all tight when it reduces to zero and hence the demands return to be deterministic. Therefore, our study
generalizes the upper bounds obtained by \Citet{Roughgarden2005-book} for deterministic demands. On the other hand,
with deterministic demands, the PoA with affine cost functions is bounded by 4/3, which indicates that the UE
is quite close to the system optimum. However, Example~\ref{eg:affine} shows that the PoA can be unbounded as $\overline{\theta}$
increases. Therefore, there is a fundamental difference between models of deterministic and stochastic demands.

Furthermore, as can been seen from \eqref{eqn:affine_implied_relation}, the variance of each link flow is affected by the number $n$
of O-D pairs whose paths share the link, and thus the variance depends on the network topology. The upper bounds we have obtained
with affine and polynomial cost functions for demands of the normal distributions also contain parameter $n$. Consequently,
unlike in the case of deterministic demands studied by \Citep{Roughgarden2002}, the network topology in general will affect the PoA
for stochastic demands. However, such an influence of the network topology is limited since we also have an upper bound without $n$
as shown in Figure~\ref{fig:PoA_compare} ($n\to \infty$).

\section{Conclusions}

In this study, we have presented a general equilibrium model for traffic games that take variation of the traffic demands into account.
The notion of mixed strategies is adopted in our models of user equilibrium and system optimum for stochastic demands to describe the travelers' and
coordinator's behaviors in a stochastic environment. The user equilibrium condition is reformulated as a VI problem, which enables us
to address the issue of existence and uniqueness of the equilibrium.

The PoA is bounded with affine and more general polynomial link cost functions respectively. For affine link cost functions,
a tight upper bound is established for general  demand distributions. For general polynomial link cost functions,
we bounded the PoA for two settings of the demand distributions, general positive-valued distributions and the normal distributions.
We have also demonstrated the tightness of the upper bounds under various special cases and presented numerical comparison among them.

We feel that the following issues are interesting to address and to extend our work in the future. Firstly,
as the expectation
of the travel costs and total cost are approximated by simple functions of the mean flows in this study, there is room for
generalization. Secondly, to improve the general upper bounds on the PoA, it would help to reformulate the user equilibrium condition
into another optimization problem. Thirdly, consider other specific demand distributions, such as the
log-normal distributions, to improve the general upper bound on the PoA.

\section*{Acknowledgements}

This work is funded by EPSRC, Science and Innovation Award (EP/D063
191/1). The authors wish to thank the support.

\bibliographystyle{elsarticle-harv}
\bibliography{ref}







\end{document}